\documentclass[11pt]{article}
\newcommand{\remove}[1]{}
\newcommand{\ignore}[1]{}

\addtolength{\paperheight}{12pt}
\usepackage{amsmath}
\usepackage{amsfonts}
\usepackage{graphicx}
\usepackage{amssymb}
\usepackage{amsthm}
\usepackage{caption}

\newtheorem{thm}{Theorem}

\newtheorem{lmma}[thm]{Lemma}

\newtheorem{cor}[thm]{Corollary}

\newtheorem{fact}[thm]{Fact}

\theoremstyle{definition}
\newtheorem{defn}[thm]{Definition}
\theoremstyle{remark}

\usepackage[T1]{fontenc}
\usepackage{yfonts}
\usepackage{fullpage}

\title{Secure End-to-End Communication with Optimal Throughput In Unreliable Networks}
\author{Paul Bunn\thanks{Google Inc$.$ {\tt paulbunn@google.com}} \and Rafail Ostrovsky\thanks{UCLA Departments of Computer Science and Mathematics. {\tt rafail@cs.ucla.edu}}}
\date{}
\begin{document}
\maketitle

\begin{small}
\noindent{\bf Abstract.}
We demonstrate the feasibility of end-to-end communication in highly unreliable networks.  Modeling a network as a graph with vertices representing nodes and edges representing the links between them, we consider two forms of unreliability: unpredictable edge-failures, and deliberate deviation from protocol specifications by corrupt and maliciously controlled nodes.

We present a routing protocol for end-to-end communication that is simultaneously resilient to both forms of unreliability.  In particular, we prove that our protocol is {\em secure} against arbitrary actions of the corrupt nodes controlled by a polynomial-time adversary, achieves {\em correctness} (Receiver gets {\em all} of the messages from Sender, in-order and without modification), and enjoys provably optimal throughput performance, as measured using {\em competitive analysis}. Competitive analysis is utilized to provide protocol guarantees again malicious behavior without placing limits on the number of the corrupted nodes in the network.

Furthermore, our protocol does not incur any asymptotic memory overhead as compared to other protocols that are unable to handle malicious interference of corrupt nodes.  In particular, our protocol requires $O(n^2)$ memory per processor, where $n$ is the size of the network.  This represents an $O(n^2)$ improvement over all existing protocols that have been designed for this network model.

\bigskip
\noindent {\bf Keywords:} \hspace*{-1.3pt}Network \hspace*{-.5pt}Routing, \hspace*{-1.25pt}Asynchronous \hspace*{-.5pt}Protocols,  End-to-End Communication, Competitive Analysis, \hspace*{-1pt}Multi-\hspace*{-.75pt}Party \hspace*{-.75pt}Computation in Presence of Dishonest Majority, Fault Localization, Communication Complexity
%
\end{small}
\section{Introduction}
\indent \indent With the immense range of applications and the multitude of networks encountered in practice, there has been an enormous effort to study routing in various settings.  In the present paper, we investigate the feasibility of routing in a network in which neither the nodes nor the links are reliable.

We adopt the same definition of {\em unreliability} (with respect to both the links and the nodes) as was introduced in \cite{paper}.  Namely, for the network links, we do not assume any form of stability: the topology of the network is {\em dynamic} (links may spontaneously fail or come back to life at any time), transmission time across each link may vary from link to link as well as across the same link from one transmission to the next (i.e$.$ {\em asynchronous} edges), and there is no guarantee that there are enough links available (even over time) for communication to even be possible.

Meanwhile, unreliability of network {\em nodes} means that they may actively and maliciously deviate from protocol specifications, attempting to disrupt communication as much as possible.  In particular, a {\em malicious adversary} may corrupt an arbitrary subset of nodes, taking complete control over them and coordinate attacks to interfere with communication between the uncorrupt nodes.

Admittedly, few guarantees can be achieved by any protocol that is forced to operate in networks with so few assumptions.  Indeed, the absence of any assumption on connectivity means that successful routing may be {\em impossible}, for instance if all of the links remain forever inactive.  Therefore, instead of measuring the efficacy of a given protocol in terms of its absolute performance, we will employ {\em competitive analysis} to evaluate protocols: the throughput-performance of a given protocol with respect to the network conditions encountered will be compared to the performance of an {\em ideal} protocol (one that has perfect information regarding the schedule of active/inactive links and corrupt nodes, and makes perfect routing decisions based on this information).

The combination of this strong notion of unreliability together with the use of competitive analysis provides a meaningful mechanism to evaluate routing protocols in networks that demonstrate unreliability in unknown ways.  For example, we are able to compare protocols that route in networks that are susceptible to all of the above forms of unreliability, but e.g$.$ remain stable most of the time with respect to the edges (or alternatively e.g$.$ most of the nodes remain uncorrupted).  Therefore, by allowing networks to exhibit all forms of unreliability, we compromise absolute performance for robustness.  That is, no protocol will route packets quickly through a network that displays all forms of unreliability, but protocols with high competitive-ratio are guaranteed to do as well as possible, regardless of the actual network conditions.

Our approach for developing a routing protocol in unreliable networks will be from a theoretical perspective.  In Section \ref{Model} we provide a formal model for both forms of unreliability and offer definitions of {\em throughput} and {\em security}/{\em correctness} in this model. Section \ref{bach} describes a protocol that is provably optimal with respect to throughput-efficiency (as measured via {\em competitive-analysis}), is provably secure, and requires reasonable memory of internal nodes.  We emphasize that the focus of this paper is on the theoretical feasibility of routing in highly unreliable networks, and in particular no attempt has been made to minimize constants or prototype our protocol in live experiments.
\subsection{Previous Work}\label{previWork}
\indent \indent Development and analysis of routing protocols relies heavily on the assumptions made by the network model.  In this section, we explore various combinations of assumptions that have been made in recent work, highlighting positive and negative results with respect to each network model, emphasizing clearly which assumptions are employed in each case.  Since our work focuses on theoretical results, for space considerations we do not discuss below the vast amount of research and analysis of routing issues for specific network systems encountered in practice, e.g$.$ the Internet.  For example, there are a number of internet routing protocols used in practice that we will not discuss here: TCP, BGP, OSPF, etc. In the presence of adversarial conditions, these protocols often try to achieve ``best effort'' results instead of guaranteeing eventual delivery of all messages.

The amount of research regarding network routing and analysis of routing protocols is extensive, and as such we include only a sketch of the most related work, indicating how their models differ from ours and providing references that offer more detailed descriptions.\\[.2cm]
{\sc End-to-End Communication:} \hspace*{-.1cm} While there is a multitude of problems that involve end-to-end communication (e.g$.$ End-to-End Congestion Control, Path-Measurement, and Admission Control), we discuss here work that consider networks whose only task is to facilitate communication between the Sender and Receiver.  Some of these include a line of work developing the {\em Slide} protocol (the starting point of our protocol): Afek and Gafni and Rosen \cite{AGR}, Awerbuch et al$.$ \cite{AMS},  Afek et al$.$ \cite{Slide}, and Kushilevitz et al$.$ \cite{KO}.  The Slide protocol (and its variants) have been studied in a variety of network settings, including multi-commodity flow (Awerbuch and Leighton \cite{AL}), networks controlled by an online bursty adversary (Aiello et al$.$ \cite{AKOR}), synchronous networks that allow corruption of nodes (Amir et al$.$ \cite{ABO}).  Bunn and Ostrovsky consider in \cite{paper} an identical network model to the one considered in the present paper, and prove a matching upper and lower bound on optimal throughput performance for this model.  However, the mechanisms they employ to handle malicious activity is extremely expensive (in terms of memory); indeed an open problem posed in \cite{paper} was whether a protocol can achieve security against malicious nodes at no extra (asymptotic) cost with respect to memory.  We answer this question affirmatively in this paper, presenting a protocol that reduces memory requirements by a factor of $n^2$ (from $\Theta(n^4)$ to $\Theta(n^2)$, for networks with $n$ nodes).\\[.2cm]
{\sc Fault Detection and Localization Protocols:} \hspace*{-.1cm} There have been a number of papers that explore the possibility of corrupt nodes that deliberately disobey protocol specifications in order to disrupt communication.  In particular, there is a recent line of work that considers a network consisting of a {\em single path} from the sender to the receiver, culminating in the recent work of Barak et al$.$ \cite{BGX} (for further background on fault localization see references therein). In this model, the adversary can corrupt any node (except the sender and receiver) in an adaptive and malicious manner. Since corrupting any node on the path will sever the honest connection between sender and receiver, the goal of a protocol in this model is {\em not} to guarantee that all messages sent are received. Instead, the goal is to {\it detect} faults when they occur and to {\it localize} the fault to a single edge.\\
\hspace*{.5cm}Goldberg et al$.$ \cite{GXBR} show that a protocol's ability to detect faults relies on the assumption that One-Way Functions (OWF) exist, and Barak et al$.$ \cite{BGX} show that the (constant factor) overhead (in terms of communication cost) incurred for utilizing cryptographic tools (such as MACs or Signature Schemes) is mandatory for any fault-localization protocol.  Awerbuch et al$.$ \cite{AHNR} also explore routing in the Byzantine setting, although they do not present a formal treatment of security, and indeed a counter-example that challenges their protocol's security is discussed in the appendix of \cite{BGX}.\\
\hspace*{.5cm}Fault Detection and Localization protocols focus on very restrictive network models (typically synchronous networks with fixed topology and some connectivity assumptions), and throughput-performance is usually not considered when analyzing fault detection/localization protocols.\\[.2cm]
{\sc Competitive Analysis:} \hspace*{-.1cm} Competitive Analysis was first introduced by Sleator and Tarjan \cite{ST} as a mechanism for measuring the {\it worst-case} performance of a protocol, in terms of how badly the given protocol may be out-performed by an {\it off-line} protocol that has access to perfect information.  Recall that a given protocol has {\it competitive ratio} $1/\lambda$ (or is {\it $\lambda$-competitive}) if an ideal off-line protocol has advantage over the given protocol by at most a factor of $\lambda$.\\
\hspace*{.5cm}One place competitive analysis has been used to evaluate performance is the setting of distributed algorithms in asynchronous shared memory computation, including the work of Ajtai et al$.$ \cite{AADW}.  This line of work has a different flavor than the problem considered in the present paper due to the nature of the algorithm being analyzed (computation algorithm versus network routing protocol).  In particular, network topology is not a consideration in this line of work (and malicious deviation of processors is not considered).\\
\hspace*{.5cm}Competitive analysis is a useful tool for evaluating protocols in unreliable networks (e.g.\ asynchronous networks and/or networks with no connectivity guarantees), as it provides best-possible standards (since absolute performance guarantees may be impossible due to the lack of network assumptions).  For a thorough description of competitive analysis, see \cite{BE}.\\[.2cm]
%
{\sc Max-Flow and Multi-Commodity Flow:} \hspace*{-.1cm} The Max-Flow 
and Multi-Commodity Flow models assume synchronous networks with connectivity guarantees and incorruptible nodes (max-flow networks also typically have fixed topology and are {\em global-control}: routing protocols assume nodes can make decisions based on a global-view of the network; as opposed to only knowing what is happening with adjacent links/nodes).  There has been a tremendous amount of work in these areas, see e.g$.$ Leighton et al$.$ \cite{LMPSTT} for a discussion of the two models and a list of results, as well as Awerbuch and Leighton \cite{AL} who show optimal throughput-competitive ratio for the network model in question.\\[.2cm]
{\sc Admission Control and Route Selection:} \hspace*{-.1cm} 
There are numerous models that are concerned with questions of admission control and route selection:  The Asynchronous Transfer Model (see e.g.\ Awerbuch et al$.$ \cite{AAP}), Queuing Theory (see e.g$.$ Borodin and Kleinberg \cite{BK} and Andrews et al$.$ \cite{AAF}), Adversarial Queuing Theory (see e.g$.$ Broder et al$.$ \cite{BFU} and Aiello et al$.$ \cite{AOKR}).  For an extensive discussion about these research areas, see \cite{Plot} and references therein.

%
%
The admission control/route selection model assumes synchronous communication and incorruptible nodes and makes connectivity/liveness guarantees.  Among the other options (fixed or dynamic topology, global or local control), each combination has been considered by various authors, see the above reference for further details and results within each specific model.
\subsection{Our Results} \label{ourResultz}
\indent \indent We consider the feasibility of end-to-end routing in highly unreliable networks, where unreliability is encountered with respect to both the network's edges and its nodes.  In particular, we consider asynchronous networks with dynamic topology and no connectivity guarantees; comprised of corruptible nodes that may deviate from protocol specifications in a deliberately malicious manner.

We present a protocol that routes effectively in this network setting, utilizing standard cryptographic tools to guarantee {\em correctness} with low memory burden per node.  We use {\em competitive-analysis} to evaluate the throughput-efficiency of our protocol, and demonstrate that our protocol achieves optimal {\em throughput}.  Our protocol therefore represents a constructive proof of the following theorem (see Section \ref{Model} for definitions of our network model and the above terms):
\begin{thm} \label{mtn} Assuming Public-Key Infrastructure and the existence of a group-homomorphic encryption scheme, there exists a routing protocol that achieves correctness and optimal competitive-ratio $1/n$ in a distributed asynchronous network with bounded memory and dynamic topology (and no connectivity assumptions), even if an arbitrary subset of malicious nodes deliberately disobey the protocol specifications in order to disrupt communication as much as possible.
%
\end{thm}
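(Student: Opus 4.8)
The plan is to layer cryptographic authentication and a fault-localization subroutine on top of the Slide protocol, and then to argue by competitive analysis that the resulting overhead is bounded independently of the running time, so that the asymptotically optimal ratio of Slide in the honest setting is preserved. Recall that Slide and its analysis (\cite{Slide, AL}) already deliver throughput within a factor $1/n$ of the ideal offline protocol when every node is honest, and the matching lower bound showing $1/n$ is optimal for this model is proved in \cite{paper}; hence it suffices to exhibit a protocol attaining ratio $1/n$ even against malicious nodes while using only $O(n^2)$ memory per processor.

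First I would fix the packet format. The Sender partitions its input stream into blocks, applies an erasure code so that any constant fraction of a block's packets reconstructs it, then encrypts each packet under the group-homomorphic scheme and signs it together with its block index and position using its PKI key. These authenticated packets are routed exactly as in Slide: each node keeps, for every incident edge, incoming and outgoing stacks of height $O(n)$, and transfers packets across an edge according to the height-differential rule. Every edge-crossing is accompanied by a signed acknowledgement from the receiving endpoint, but --- and this is the crucial point for the memory bound --- a node never stores those acknowledgements individually; it instead folds each transfer into a short homomorphic digest maintained per neighbor, so that its entire interaction history with that neighbor is summarized in $O(n)$ space while remaining verifiable against the neighbor's own digest.

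Next I would specify the detection-and-elimination loop. Using the code, the Receiver knows how many distinct valid packets of the current block it still needs, and from the honest-Slide analysis it knows an a-priori bound on how much honest network activity can elapse before that many packets must have arrived. When this bound is exceeded, the Receiver triggers a \emph{status round} in which every node forwards, along a conceptual spanning structure, its signed per-neighbor digests and stack contents. Because honest digests are mutually consistent and signed, any discrepancy --- a packet claimed sent but not acknowledged as received, a packet lacking a valid Sender signature, a duplicated or reordered packet, or two conflicting digests for the same edge --- localizes to an edge both of whose endpoints cannot be honest, and that edge is permanently deleted from the routing graph. The homomorphic structure is exactly what lets the adjudicator verify, from $O(n)$-size digests rather than full transcripts, that the multiset of packets a node claims to have forwarded over all its edges agrees with what it claims to have received. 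Since the graph has at most $\binom{n}{2}$ edges, at most $O(n^2)$ status rounds can ever be invoked; thereafter the surviving subgraph exhibits only honest-consistent behavior, the protocol proceeds as honest Slide, and correctness follows from erasure-decoding at the Receiver together with unforgeability of the Sender's signatures (no corrupt node can manufacture a packet attributed to the Sender, and duplicates/out-of-order packets are discarded by index).

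The main obstacle, absorbing most of the technical work, is making fault-localization sound and complete \emph{within the $O(n^2)$ memory budget}: one must design the per-neighbor homomorphic digest so that (i) no sequence of adversarial transfers lets a corrupt node exhibit a digest consistent with an honest neighbor's while actually having dropped, altered, duplicated, or injected a packet --- this reduces to the binding/soundness properties of the homomorphic scheme combined with signature unforgeability; (ii) the status-round adjudication never blames an honest edge, even when corrupt nodes lie about their digests or participate selectively --- this requires a careful case analysis showing that each revealable discrepancy implicates an edge with a corrupt endpoint; and (iii) the stacks, digests, and accumulated blame-set jointly fit in $O(n^2)$ bits. Granting these, the competitive-ratio claim follows by amortization: the total throughput sacrificed across all $O(n^2)$ status rounds is a fixed polynomial in $n$, hence negligible against the unbounded message stream, leaving the honest-Slide ratio $1/n$ --- which is optimal by \cite{paper} --- intact.
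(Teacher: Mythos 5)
Your high-level architecture (Slide routing plus Sender signatures, erasure coding, a fault-localization loop, and an amortization argument that absorbs the localization overhead into the additive term of the competitive ratio) matches the paper's. But the step you yourself flag as ``the main obstacle'' is where the entire proof lives, and the mechanism you propose for it does not work as stated. You want a per-neighbor digest of $O(n)$ bits whose soundness ``reduces to the binding/soundness properties of the homomorphic scheme.'' A group-homomorphic \emph{encryption} scheme is not binding, and more fundamentally the adversary knows exactly which packets it received and which it forwarded, so nothing prevents a corrupt node from computing --- and getting an honest neighbor to co-sign --- a perfectly self-consistent short digest while substituting duplicates of old packets for new ones (the \textsf{packet-replacement} attack that motivates the whole paper). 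The conservation check ``multiset forwarded $=$ multiset received minus multiset stored'' is only as good as the node's self-reported storage term, and with a short deterministic summary of a \emph{known} multiset the node can always make the books balance. The paper's actual idea is different: the Sender secretly and uniformly partitions the $D=\Theta(n^3)$ codeword parcels into $K=k$ buckets, ships each parcel with an encryption of its bucket indicator, and has nodes homomorphically accumulate per-bucket counters $\Psi_{u,v}$. Soundness then comes not from binding but from \emph{secrecy plus counting}: a transmission failing as in F3 forces more than $(k-1)nC$ replacements, and an adversary who cannot see the partition can make the conservation equation \eqref{item3MMM} hold for all nodes only by predicting the outcome of a balls-in-buckets experiment, which succeeds with probability at most $\sqrt{ek}/\left(\sqrt{2\pi}\right)^{k-1}$ (Lemma \ref{eileenM}, Fact \ref{wildWest}).

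A second gap: your ``status round \ldots along a conceptual spanning structure'' presupposes that evidence can be collected on demand, but the model makes no connectivity guarantees and corrupt nodes can simply decline to forward the evidence, stalling adjudication indefinitely. The paper replaces on-demand collection with the blacklist/testimony mechanism: after a failed transmission every participating node is blacklisted and barred from routing codeword parcels until its testimony reaches the Sender, and a counting argument (Statement (1) of Lemma \ref{threeTog}) shows that at most $n-1$ transmissions can fail before the Sender holds a complete testimony set for one of them and can eliminate a corrupt node. Without some such forcing device, your claimed bound of $O(n^2)$ adjudication rounds has no starting point, and the additive term $g(n,C)$ in Definition \ref{tdef} cannot be bounded.
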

%
As mentioned in Section \ref{previWork}, our protocol solves an open problem from \cite{paper}, which was to provide provable security (while maintaining optimal throughput) at {\bf no} additional cost (in terms of required processor memory) over protocols that do not provide security against corrupt nodes. Our protocol utilizes novel techniques to achieve exactly this: the memory burden is reduced from\footnote{These bounds ignore the cost of security parameter $k$ and bandwidth parameter $P$, which are treated as constants. Including them explicitly would yield memory costs of $\Theta(kPn^4)$ for the protocol of \cite{paper} versus $\Theta(kPn^2)$ here.} $\Theta(n^4)$ to $\Theta(n^2)$, which matches the memory requirements of a corresponding (insecure) protocol of \cite{Slide}. We provide here a brief overview of the new insights that enabled us to achieve this reduction.

We begin by describing why the $\Theta(n^4)$ bits of memory per node was required in \cite{paper} to ensure security. Consider the \textsf{packet-replacement} adversarial strategy, where corrupt nodes replace new packets they receive with duplicate copies of old packets that they have already transferred, thereby effectively deleting all new packets the Sender inserts. The protocol of \cite{paper} protected against this strategy by having each node maintain a signed transaction with each of its neighbors, recording the number of times {\em every} packet was passed between them. While this approach ensures that a node performing packet-replacement will be caught, it is extremely costly in terms of required memory: Each node has to remember, for every packet $p$ it encountered, the number of times it sent/received $p$ along each of its adjacent edges. For networks with $n$ nodes, since there were $\Theta(n^3)$ relevant packets\footnote{The $n^3$ appearing here is {\em not} a bound on the size of the input stream of packets (which can be any arbitrarily large polynomial in $n$); it is an upper-bound on the number of packets being stored by the internal nodes at any time.} and a node may have $\Theta(n)$ neighbors, the memory burden of storing this transaction information was $\Theta(n^4)$. Not only did this large memory complexity mean the protocol of \cite{paper} was unlikely to be feasibly implemented in practice, it was also the case that the cost of $n^4$ for storing the transaction history (for the purpose of identifying corrupt behavior) far out-weighed the per-node memory costs of the data packets being transferred ($n^2$), so the memory resources were being consumed by network {\em monitoring} as opposed to {\em routing}.

The present paper overcomes both of these issues, reducing the overall memory burden to $\Theta(n^2)$, as well as allocating the majority of resources to routing instead of monitoring. In order to achieve this, we had to abandon the idea of tracking each individual packet, and develop a novel technique to address packet-replacement. We began by generalizing the per-packet tracking of \cite{paper} as follows: We partition the $D = \Theta(n^3)$ packets to be sent into $K$ sets $\{\mathcal{S}_1, \dots, \mathcal{S}_K\}$ (we will optimize for the value of $K(k)$, which depends on the security-parameter $k$, in Section \ref{bach}), and then we have nodes record transaction information with their neighbors on a per-{\em set} basis rather than a per-{\em packet} basis. Namely, nodes maintain $K$ counters of how many packets in each set they have transferred with each neighbor, so that if a packet $p \in \mathcal{S}$ is transferred between two nodes, the nodes increment a counter for set $\mathcal{S}$. In this way, if a malicious node replaces a packet $p \in \mathcal{S}$ with a packet $p' \in \mathcal{S}'$, the per-set counters will help in detecting this if $\mathcal{S} \neq \mathcal{S}'$.

With this generalization, we see that as $K$ varies in $[1, D]$, there is a trade-off in the memory burden of storing the transactions versus the probability of protecting against packet-replacement: the smaller $K$ is, the lower the per-node memory burden but the higher the probability a node performing packet-replacement can get away with it. The primary technical achievement of this paper was in developing a mechanism that guarantees that {\em any} packet-replacement strategy performed by malicious node(s) will succeed only with negligible probability, even for small values of $K$.

We achieve this by first using error-correction to ensure that our protocol is robust enough to handle minor amounts of packet-replacement and still transmit messages, so that in order to impede communication via the packet-replacement strategy, a large number of packets must be replaced. Next, we observe that if a malicious node replaces a packet $p \in \mathcal{S}$ with $p' \in \mathcal{S}'$, then if the choices of $p$ and $p'$ are uniformly random (among the $D$ total packets), then the probability that $\mathcal{S} = \mathcal{S}'$ is roughly $1/K$. By using cryptography, we are able to obfuscate the partitioning of packets into sets in a manner that is invisible to all nodes except the Sender, and we demonstrate how this reduces {\em any} adversarial strategy of packet-replacement to the uniform case of replacing one packet with a randomly chosen second packet. With this reduction in hand, it becomes a straightforward probabilistic analysis for choosing an appropriate value for the parameter $K$ so as to minimize memory burden and still guarantee (with negligible probability of error) that packet-replacement will be detected. Details of the protocol and this analysis can be found in Section \ref{bach}.
\section{The Model} \label{Model}
\indent \indent In this section, we describe formally the model in which we will be analyzing routing protocols.  The network is viewed as a graph $G$ with $n$ vertices (or {\em nodes}), two of which are designated as the {\em Sender} and {\em Receiver}.  The Sender has a stream of messages $\{m_1, m_2, \dots \}$ that it wishes to transmit through the network to the Receiver.

For ease of discussion, we assume that all edges in the network have a fixed bandwidth/capacity, and that this quantity is the same for all edges.  We emphasize that this assumption does not restrict the validity of our claims in a more general model allowing varying bandwidths, but is only made for ease of exposition. We will use the following terminology throughout this paper (see protocol description in Section \ref{lastDesc} for more explanation of these terms an how they are used):

\begin{defn} \label{messageDef} Let $P$ denote the bandwidth (e.g$.$ in bits) of each edge. A \textsf{packet} (of size $\leq P$) will refer to any bundle of information sent across an edge. A \textsf{message} refers to one of the Sender's input $m_i$, and we assume without loss of generality that each message is comprised of $\Theta(kPn^3)$ bits ($k$ is the security parameter; see below). A (message) \textsf{codeword} refers to an encoded message, which will be partitioned into \textsf{codeword parcels}, whose size is small enough such that one codeword parcel (plus some control information) fits in the bandwidth of an edge $P$. More generally, we will refer to the various components of a packet as \textsf{parcels}. A (message) \textsf{transmission} consists of the rounds required to send a single codeword from Sender to Receiver.
\end{defn}

We model {\em asynchronicity} via an \textsf{edge-scheduling adversary} $\mathcal{A}$ that controls edges as follows.  A \textsf{round} consists of a single edge $E(u,v)$ (chosen by the adversary) being \textsf{activated}:
	\begin{itemize}
	\item[1.] If $\mathcal{A}$ has at least one packet from $u$ to be sent to $v$, then $\mathcal{A}$ delivers exactly one of them (of $\mathcal{A}$'s choosing) to $v$; the same is done for one packet from $v$ to $u$
	\item[2.] After seeing the delivered packet, $u$ chooses the next packet to send $v$ and gives it to $\mathcal{A}$ ($v$ does the same for $u$); $\mathcal{A}$ will store these until the next round that $E(u,v)$ is activated
	\end{itemize}
If $u$ does not have a packet it wishes to send $v$ in Step (2), then $u$ can choose to send nothing.  Alternatively, $u$ may send multiple packets to $\mathcal{A}$ in Step 2, but only one of these packets (of $\mathcal{A}$'s choosing) gets delivered in Step 2 of the next round $E(u,v)$ is activated. The Adversary does not send anything to $v$ in Step (1) if it is not storing a packet from $u$ to $v$ during round $E(u,v)$.

\begin{defn} A packet will be said to be in an \textsf{outstanding request} if $u$ has sent the packet to $\mathcal{A}$ as in Step (2) of some round, but that packet has not yet been delivered by $\mathcal{A}$.
\end{defn}

Aside from obeying the above specified rules, we place no additional restriction on the edge-scheduling adversary.  In other words, it may activate whatever edges it likes (this models the fact our network makes \textsf{no connectivity assumptions}), wait indefinitely long between activating the same edge twice (modeling both the \textsf{dynamic} and \textsf{asynchronous} features of our network), and do anything else it likes (so long as it respects steps (1) and (2) above each time it activates an edge) in attempt to hinder the performance of a routing protocol.

In addition to the edge-scheduling adversary, our network model also allows for a polynomially bounded (in number of nodes $n$ and a security parameter $k$) \textsf{node-controlling adversary} to corrupt the nodes of the network.  The node-controlling adversary is \textsf{malicious}, meaning that it can take complete control over the nodes it corrupts and force them to deviate from any protocol in whatever manner it likes.  We further assume that the node-controlling adversary is \textsf{adaptive}, which means it can corrupt nodes at any stage of the protocol, deciding which nodes to corrupt based on what it has observed thus far.  We do not impose any ``access-structure'' limitations on the node-controlling adversary:  it may corrupt any nodes it likes (although if the Sender and/or Receiver is corrupt, secure routing between them is impossible).  We say a routing protocol is \textsf{correct} (or \textsf{secure}) if the messages reach the Receiver in-order and unaltered.

The separation of the (edge-scheduling and node-controlling) adversaries into two distinct entities is solely for conceptual purposes to emphasize the nature of unreliability in the edges versus the nodes.  For ease of discussion, we will often refer to a single adversary that represents the combined efforts of the edge-scheduling and node-controlling adversaries.

Finally, our network model is \textsf{on-line} and \textsf{distributed}, in that we do not assume that the nodes have access to any information (including future knowledge of the adversary's schedule of activated edges) aside from the packets they receive during a round they are a part of.  Also, we insist that nodes have \textsf{bounded memory}\footnote{For simplicity, we assume all nodes have the same memory bound (which may be a function of the number of nodes $n$ and security parameter $k$), although our argument can be readily extended to handle the more general case.} which is at least $\Omega(n^2)$.

Our mechanism for evaluating the \textsf{throughput} performance of protocols in this network model will be as follows: Let $f^{\mathcal{A}}_{\mathcal{P}}: \mathbb{N} \rightarrow \mathbb{N}$ be a function that measures, for a given protocol $\mathcal{P}$ and adversary $\mathcal{A}$, the number of messages that the Receiver has received as a function of the number of rounds that have passed.  Note that in this paper, we will consider only deterministic protocols, so $f^{\mathcal{A}}_{\mathcal{P}}$ is well-defined.  The function $f^{\mathcal{A}}_{\mathcal{P}}$ formalizes our notion of throughput.

We utilize competitive analysis to gauge the throughput-performance of a given protocol against all possible competing protocols:
\begin{defn} \label{tdef} We say that a protocol $\mathcal{P}$ has \textsf{competitive-ratio} $\boldsymbol{1/\lambda}$ (respectively is $\boldsymbol{\lambda}$-\textsf{competitive}) if there exists a constant $c$ and function {\em g(n, C)} (where $C$ is the memory bound per node) such that for all possible adversaries $\mathcal{A}$ and for all $\mathtt{x} \in \mathbb{N}$, the following holds for all protocols $\mathcal{P'}$:\footnote{$\lambda$ is taken as the infimum of all values that satisfy \eqref{LambdaDef}, and is typically a function of network size $n$.}
	\begin{equation} \label{LambdaDef}
	f^{\mathcal{A}}_{\mathcal{P}'}(\mathtt{x}) \thickspace \leq \thickspace (c \thickspace \cdot \thickspace \lambda) \thickspace \cdot f^{\mathcal{A}}_{\mathcal{P}}(\mathtt{x}) \thickspace + \thickspace g(n,C)
	\end{equation}
\end{defn}
\noindent Note that while $g$ may depend on the network size $n$ and the bounds placed on processor memory $C$, both $g$ and $c$ are {\em independent} of the round $\mathtt{x}$ and the choice of adversary $\mathcal{A}$.  Also, equation \eqref{LambdaDef} is only required to hold for protocols $\mathcal{P'}$ that never utilize a corrupt node once it has been corrupted.

We assume a Public-Key Infrastructure (PKI) that allows digital signatures.  In particular, before the protocol begins we choose a security parameter sufficiently large and run a key generation algorithm for a digital signature scheme, producing $n = |G|$ (secret key, verification key) pairs $(sk_u, vk_u)$.  As output to the key generation, each processor $u \in G$ is given its own private signing key $sk_u$ and a list of all $n$ signature verification keys $vk_{v}$ for all nodes $v \in G$.  In particular, this allows the Sender and Receiver to sign messages to each other that cannot be forged (except with negligible probability in the security parameter) by any other node in the system.

We also assume the existence of a {\em group-homomorphic} encryption scheme $\mathcal{E}$ on a group $\mathcal{G}$:\footnote{$|\mathcal{G}|$ should be larger than the total number of codeword parcel transfers (during any transmission) between two nodes, when at least one of the nodes is {\em honest}. $|\mathcal{G}| \in \Omega(kn^4)$ is sufficient (see protocol description in Section \ref{lastDesc}).}
	\begin{equation*}
	\mathcal{E} : \mathcal{G} \rightarrow \mathcal{H}, \quad \mbox{with} \quad \mathcal{E}(g_1 \circ_{\mathcal{G}} \thickspace g_2) \thickspace = \thickspace \mathcal{E}(g_1) \circ_{\mathcal{H}} \thickspace \mathcal{E}(g_2),
	\end{equation*}
where $\circ_{\mathcal{G}}$ (respectively $\circ_{\mathcal{H}}$) represents the group operation on $\mathcal{G}$ (respectively $\mathcal{H}$). To simplify the exposition, in what follows we will assume $\mathcal{G} = \mathbb{Z}_N$ for sufficiently large $N$.  We note that such a scheme exists under most of the commonly used cryptographic assumptions, including {\em factoring} \cite{ou}, {\em discrete log} \cite{elG}, {\em quadratic residuosity} \cite{gm}, and {\em subgroup decision problem} \cite{bgn}.  We extend our encryption scheme to $\mathbb{Z}_N^K$ in the natural way:
	\begin{equation*}
	\mathcal{E}\hspace*{-2.2pt}: \mathbb{Z}_N \times \dots \times \mathbb{Z}_N \thickspace \rightarrow \thickspace \mathcal{H} \times \dots \times \mathcal{H} \qquad \hspace*{-5.1pt}\mbox{via} \qquad \hspace*{-5.1pt} \mathcal{E}(g_1, \dots, g_K) \hspace*{-1pt}:= (\mathcal{E}(g_1), \dots, \mathcal{E}(g_K))
	\end{equation*}

Finally, we assume that internal nodes have capacity $C \in \Omega(Pn^2)$ (and in particular $C \geq 24Pn^2$), and that $P = \Omega(k^2+\log n)$ (where $P$ is the capacity of each edge and $k$ is the security parameter for the encryption scheme).
\section{Routing Protocol} \label{bach}
\indent \indent In this section we present a routing protocol that enjoys competitive-ratio $1/n$ with respect to throughput (which is optimal, see \cite{paper}) in networks modelled as in Section \ref{Model}.
\subsection{Description of the Routing Protocol}\label{lastDesc}
\indent \indent The starting point of our protocol will be the Slide Protocol, introduced by Afek et at.\ \cite{AGR}, and further developed in a series of works: \cite{AMS}, \cite{Slide}, \cite{KO}, \cite{ABO}, and \cite{paper}.  The original Slide protocol assumes that nodes have buffers (viewed as stacks) able to store $C=\Theta(n^2)$ packets at any time, and simply put, it calls for a node $u$ to send a packet to node $v$ across an activated edge $E(u,v)$ if $v$ is storing fewer packets in its buffer than $u$.

The Slide protocol is robust in its ability to handle edge-failures (modelled here via the edge-scheduling adversary). This robustness is achieved via the use of {\em error-correction} to account for packets that get stuck in the buffer of a node that became isolated from the rest of the network due to edge-failures.  In particular, each message is expanded into a codeword, which is then partitioned into $D:=knC/\lambda$ parcels $\{p_i\}$, where $\lambda$ is the tolerable error rate and $k$ is the security parameter. Recall from Definition \ref{messageDef} that messages have size $|m| = O(kPn^3)$, and in particular they are small enough so that a codeword parcel (plus some control information of $\Theta(k^2 + \log n)$ bits, see below) can be transmitted in a single round; i.e$.$ $|p_i| \hspace*{-1pt}= \hspace*{-1pt}|m| / D \leq \hspace*{-1pt}P \hspace*{-0.5pt}- \hspace*{-0.5pt}\Theta(k^2 + \log n)$. The Receiver can decode the codeword and obtain the original message block provided he receives $(1-\lambda)D$ codeword parcels.

Our protocol modifies the original Slide protocol to provide security against a node-controlling adversary at no additional (asymptotic) cost: we achieve optimal throughput (competitive-ratio $1/n$), and the memory per internal node is within a factor of two of the memory requirement of the original Slide protocol. We obtain security against malicious nodes by including extra \textsf{control information} (described below) with each packet transfer, and by having nodes sign all communications. As mentioned in Section \ref{ourResultz}, our protocol closely resembles that of \cite{paper}, except we have changed the nature of the control information so that our protocol is able to provide security without incurring any extra (asymptotic) memory costs.

The rules governing codeword parcel transfers are the same as in Slide (a node should send a parcel to its neighbor if it is currently storing more parcels), except for the caveat that nodes should not transfer any codeword parcels before they have received all of the Sender's alert parcels for the current codeword transmission (explained below); and nodes should never transfer codeword parcels with blacklisted or eliminated nodes (see below). Also, all communication is signed to ensure authenticity: the Sender signs all packets upon inserting them into the network (so a node can verify the authenticity of each codeword parcel), and control information between two nodes is signed by both nodes. In the following subsections, we present our protocol by describing the control information, routing rules, blacklist, and overall execution of the protocol.
%

\vspace*{12pt}\noindent\textbf{Control Information.} When a node is selecting a codeword parcel to send to an adjacent node, he will have room to bundle four parcels of control information plus the codeword parcel in a single packet.\footnote{The size of the codeword parcels are deliberately set to ensure a packet has room to fit these five parcels, since codeword parcels were designed to have size $P - \Theta(k^2 + \log n)$.}  In particular, every packet transferred will include one parcel of each of the four categories of \textsf{control information} (how the node selects the specific parcel within each category will be explained in the Routing Rules below):
	\begin{enumerate}
	\item \textsf{Sender/Receiver Alerts}. The Sender's alert consists of up to $2n$ parcels, all time-stamped with the index of the present codeword transmission. The first of these indicates the status (S1 or F2-F4, see below) of the previous transmission; and the next $n$ parcels give the time-stamp of the most recent (up to) $n$ transmissions that {\em failed} (F2-F4). The final $n-1$ parcels are for each of the nodes (excluding the Sender), indicating if that node is blacklisted or eliminated (see below), and if so the transmission this happened. The Receiver's alert consists of a single parcel indicating either that the Receiver successfully decoded the current codeword, or that it has received inconsistent potential information (see below).

	\item \textsf{Potential Information}. Each node $u$ maintains (in a single parcel) up-to-date information about its potential drop $\Phi_u$ (see Definition \ref{potDropp}) for the current codeword transmission.

	\item \textsf{Status Information}. For each of its neighbors, a node will maintain (in a single parcel) up-to-date information regarding all codeword parcel transfers with that neighbor for the current codeword transmission. More specifically, this information contains the net {\em potential drop} $\Phi_{u,v}$ and {\em obfuscated count} $\Psi_{u,v}$ across each adjacent edge $E(u,v)$ (see Definitions \ref{potDropp} and \ref{obfCount} below).
	
	\item \textsf{Testimonies}. At the end of a codeword transmission $\mathtt{T}$, a node will have one final (current as of the end of the transmission) status parcel $(\Phi_u, \Phi_{u,v}, \Phi_{v,u}, \Psi_{u,v}, \Psi_{v,u})$ for each neighbor. If the node later (in a future transmission) learns that $\mathtt{T}$ {\em failed} (see below), then these $n-1$ status parcels become the node's \textsf{testimony} for transmission $\mathtt{T}$. Since nodes do not participate in routing codeword parcels until the Sender has its testimony (see blacklist below), each node will only ever have at most one transmission $\mathtt{T}$ for which it needs to remember its own testimony; thus, at any time, there are at most $(n-1)^2$ testimony parcels in the network.\footnote{Since it is the Sender who ultimately collects testimonies to identify malicious behavior, his testimony parcels need not be stored or transferred to any node other than itself.}
	\end{enumerate}
Each of these types of control information serve a separate function. The Sender's alert parcels mark the start of a new codeword transmission, and relay information about which previous transmissions failed (so that nodes know which testimony parcels to store and transfer) and which nodes are blacklisted/eliminated (see below). The Receiver's alert parcel marks the end of a transmission, which either indicates a \textsf{successful} transmission (Receiver could decode the codeword) or that the transmission failed due to inconsistencies in potential differences (see below); ultimately it is the Sender who will process the Receiver's alert parcel and determine the next steps.

Potential parcels are ultimately used by the Receiver, who will use them to determine if there are any inconsistencies, and if so form a Receiver alert indicating a failed transmission (see \eqref{potEqn} below).

Unlike all other control information, status parcels are {\em not} transferred through the network (ultimately to Sender or Receiver), but rather are only kept locally and transferred between the two nodes for which the status parcel is keeping up-to-date records for the current transmission. Testimony parcels are ultimately used by the Sender to identify a corrupt node.

We now formalize these concepts with a handful of definitions.
\begin{defn} \label{htDef} The \textsf{height} $\boldsymbol{H_u}$ of an internal node $u$ is the number of codeword parcels $u$ is currently storing in its buffer (including those in outstanding requests, of which $u$ is maintaining a copy). The height of the Sender is defined to be the constant $C$ (the capacity of an internal node's buffer); and the Receiver's height is defined to be zero.
\end{defn}
\begin{defn} \label{potDropp} The \textsf{potential difference} $\boldsymbol{\phi_{u,v}}$ of two nodes $u$ and $v$ is the difference in their heights (always measured as a positive quantity): $\phi_{u,v} := | H_u - H_v |$.
The \textsf{directional potential drop} $\boldsymbol{\Phi_{u,v}}$ \textsf{over an edge} $\boldsymbol{E(u,v)}$ will be the sum of the potential differences for the rounds when $u$ transferred $v$ a codeword parcel:\footnote{Formally, for a packet transferred during Step (1) of an edge activation, the heights used to compute the potential difference are {\em not} the current heights of the nodes, but rather the heights each of the nodes had the {\em previous} time the edge was activated. See Figure \ref{briefDisc}, in which these heights are denoted as $H_v$ and $H_{old}$.}
	\begin{equation}
	\Phi_{u,v} := \sum_{u \rightarrow v} \phi_{u,v}
	\end{equation}
The \textsf{potential drop over an edge} $\boldsymbol{E(u,v)}$ will be the sum of the directional potential drops across the edge: $\Phi_{u,v} + \Phi_{v,u}$. The \textsf{potential drop at a node $\boldsymbol{u}$} $\boldsymbol{\Phi_u}$ will be the sum of the potential drops over all its adjacent edges:
	\begin{equation}\label{paperW}
	\Phi_u := \sum_{v \in G} (\Phi_{u,v} + \Phi_{v,u})
	\end{equation}
\end{defn}

Recall from Section \ref{Model} the existence of a homomorphic encryption scheme $\mathcal{E}$ on $\mathbb{Z}_N^K$.  At the start of each codeword transmission, the Sender randomly partitions the $D$ codeword parcels into $K:=k$ sets, making a uniform random choice for each parcel. Define the distribution $\chi_{_{\mathtt{T}}}\hspace*{-2pt}: D \rightarrow \mathbb{Z}_N^K$ (which depends on the codeword transmission $\mathtt{T}$) to represent these assignments; i.e$.$ if parcel $p$ has been assigned to the $i^{th}$ set, then $\chi(p)$ is the unit vector in $\mathbb{Z}_N^K$ with a `1' in the $i^{th}$ coordinate.  Note that only the Sender knows $\chi$, and it will remain obfuscated from all internal nodes, as the only information they will ever see are the encrypted values $\mathcal{E}(\chi(p))$ (which are computed by the Sender and bundled in the same packet as the codeword parcel, so that $(p, \mathcal{E}(\chi(p))$) will travel in the same packet as it is transferred through the network to the Receiver).
\begin{defn} \label{obfCount} The \textsf{directed obfuscated count} $\boldsymbol{\Psi_{u,v}}$ between two nodes is an (encrypted) $K$-tuple, in which the $i^{th}$ coordinate represents the number of codeword parcels $p$ with $\chi(p) = i$ that have been transferred from $u$ to $v$. Since $p$ and $\mathcal{E}(\chi(p))$ are always passed together in a single packet, the current directed obfuscated count can be computed by any internal node along any of its adjacent edges as:
	\begin{equation}\label{eObfCount}
	\Psi_{u,v} := \sum_{p \in \mathcal{P}_{u,v}} \hspace*{-6pt}\mathcal{E}(\chi(p)),
	\end{equation}
where $\mathcal{P}_{\hspace*{-2pt}u,v}$ \hspace*{-1pt}denotes the multiset of codeword parcels transferred from \hspace*{-1pt}$u$ to $v$. The \textsf{obfuscated count} $\boldsymbol{\Psi_{\hspace*{-1pt}u}}$ \hspace*{-1pt}\textsf{at $\boldsymbol{u}$} is: 
	\begin{equation}\label{motart}
	\Psi_u := \sum_{p \in u} \mathcal{E}(\chi(p)),
	\end{equation}
where the sum is taken over the (current codeword) parcels $p$ that $u$ is storing at the end of the current transmission. Notice that the homomorphic properties of the encryption scheme $\mathcal{E}$ allow the nodes to compute the right-hand-side of \eqref{eObfCount} and \eqref{motart}.
\end{defn}
\vspace*{12pt}\noindent\textbf{Routing Rules.} Figure \ref{briefDisc} gives a succinct description of a node's instructions for when it is part of an activated edge.

\begin{figure}[h!t]
\vspace*{2mm}
\begin{center}
\framebox{\footnotesize
\begin{minipage}[b]{6.0in}
\begin{footnotesize}
\begin{tabbing}
l\=al\=aaaaaaaaaaaaaaaal\=aaaaaaaaaaaaaaaaaaaa\=aaaaaaaaaaaa\=aaaaaaaaaaa\=aaaaaaaaaaaaaaaaaaaaaaaa\=aa\=a \kill
\\[-13.5pt]\textbf{Routing Rules for node $u$ along $E(u,v)$}\\[-1.01pt]
\# Notation: $(H_{old}, p_{old}, \mathcal{E}(\chi(p_{old})))$ denotes prev$.$ ht$.$ and codeword parcel $u$ sent $v$;\hspace*{-0.3pt}\\[-1.01pt]
\textsf{Input (Received via $\mathcal{A}$)}:\\[-2.2pt]
\> Height $H_v$ of $v$, codeword parcel \hspace*{-1pt}$p$ and $\mathcal{E}(\chi(p))$,\\[-1.01pt]
\> Control Information: alert parcel, status parcel, potential parcel, testimony parcel\\[2.01pt]
\textsf{DO}:\\[-2.751pt]
\>Verify status parcel and all signatures are valid, if not, Skip to \textbf{Send Next Packet}\hspace*{-2.3pt}\\[-1.01pt]
\>Store alert parcel, potential parcel, and testimony parcel\\[-1.01pt]
\>If \hspace*{-1pt}$u$ or \hspace*{-1pt}$v$ is blacklisted or eliminated, \hspace*{-1pt}or \hspace*{-1pt}$u$ hasn't rec'\hspace*{-1pt}d all parcels from Sender's alert:\hspace*{-2.3pt}\\[-1.01pt]
\>\>Skip \hspace*{-.5pt}to \hspace*{-.5pt}\textbf{Send \hspace*{-1pt}Next \hspace*{-1pt}Packet}\hspace*{-1pt}\\[-1.01pt]
\>If $u$ is the Sender and $H_v<C+2n-C/2n$:\\[-1.01pt]
\>\>\textbf{Insert} $\boldsymbol{p_{old}}$: \>Ignore $p$, Delete $p_{old}$, Update $\Phi_{u,v}$, $\Psi_{u,v}$, and $\Phi_u$\\[-1.01pt]
\>If $u$ is the Receiver and $H_v > C/2n - 2n$:\\[-1.01pt]
\>\>\textbf{Receive} $\boldsymbol{p}$: \>Store $p$, Update $\Phi_{v,u}$, $\Psi_{v,u}$, and $\Phi_u$\\[-1.01pt]
\>If $u$ is not Sender or Receiver and $H_{old} > H_v - 2n + C/2n$:\\[-1.01pt]
\>\>\textbf{Send} $\boldsymbol{p_{old}}$: \>Ignore $p$, Delete $p_{old}$, Update $\Phi_{u,v}$, $\Psi_{u,v}$, and $\Phi_u$\\[-1.01pt]
\>If $u$ is not Sender or Receiver and $H_{old} < H_v + 2n - C/2n$:\\[-1.01pt]
\>\>\textbf{Receive} $\boldsymbol{p}$: \>Store $p$ (and keep $p_{old}$), Update $\Phi_{v,u}$, $\Psi_{v,u}$, and $\Phi_u$\\[-1.01pt]
\>\textbf{Send Next Packet}\\[-13pt]
\end{tabbing}
\end{footnotesize}
\end{minipage}}
\end{center}
\vspace{-15pt}
\caption{\small{Succinct Description of Packet Transfer Rules of Our Protocol}}
\label{briefDisc}
\vspace*{-4mm}
\end{figure}

Recall that Step 2 of an activated edge calls for node $u$ to send a packet to $\mathcal{A}$ that it wishes to deliver to $v$ next time $E(u,v)$ is activated. The following rules explain how $u$ decides which data to include in the packet (see {\em Send Next Packet} in Figure \ref{briefDisc}):
	\begin{enumerate}
	\item Current Height $H_u$ (see Definition \ref{htDef})

	\item Codeword Parcel\footnote{If at any time $\Phi_u > kCD$, then $u$ stops transferring codeword parcels altogether (sending a special indicator $\bot$ for its height $H$ so that other nodes know not to exchange codeword parcels with $u$). Since each codeword parcel transfer corresponds in an {\em increase} of at least $\hspace*{-.75pt}C/\hspace*{-.75pt}n$\hspace*{1.5pt}-\hspace*{1.5pt}$2n = \hspace*{-.75pt}\Theta(n)$ to $\Phi_u$, this ensures honest nodes will transfer at most $O(k\hspace*{-1pt}^2n\hspace*{-1pt}^4\hspace*{-1pt})$ codeword parcels\hspace*{-1pt}, and also bounds the number of distinct signatures from $u$ per transmission by $O(k\hspace*{-1pt}^2n\hspace*{-1pt}^4\hspace*{-1pt})\hspace*{-1pt}$).} \hspace*{-2pt}$(p, \hspace*{-.5pt}\mathcal{E}\hspace*{-1pt}(\hspace*{-1pt}\chi(p\hspace*{-1pt})\hspace*{-1pt})$: \hspace*{-1.5pt}Chosen randomly among those not already in outstanding request
	
	\item Control Information: Send up to one parcel of each type of control information, selected as follows. Let $N_1, N_2, \dots, N_n$ denote the nodes and $A = A(u,v)$ the number of times $E(u,v)$ has been activated so far.\footnote{Note that in order to determine which control information parcels to send, $u$ will have to store the following along each edge: a) how many times $A = A(u,v)$ edge $E(u,v)$ has been activated in the current transmission; b) the previous alert parcel $u$ sent to $v$; c) for each node $N_i$, the previous testimony packet of $N_i$'s that $u$ sent to $v$.} Then $u$ includes the following parcels of control information:
		\begin{itemize}\setlength{\itemsep}{0pt} \setlength{\parskip}{0pt} \setlength{\parsep}{0pt}\vspace*{-4pt}
		\item[-] Sender/Receiver Alert: Receiver's alert (if $u$ has it); otherwise next Sender alert parcel
		\item[-] Status Information: ($\Phi_{u,v}, \Phi_{v,u}, \Psi_{u,v}, \Psi_{v,u})$
		\item[-] Potential Information: Let $i \equiv A \hspace*{3pt}(\hspace*{-7pt}\mod n)$; select parcel $\Phi_{N_i}$ (if $u$ has it)
		\item[-] Testimony: Let $i \equiv A \hspace*{3pt}(\hspace*{-7pt}\mod n)$; select next testimony parcel of $N_i$'s (if $u$ has it)
		\end{itemize}
	\end{enumerate}
\vspace*{9pt}\noindent\textbf{Putting It All Together.} The above sections have focused on what happens in a single codeword transmission. We now give an overview of the entire protocol, including how/when the transmission of one message codeword ends and the next begins, and how the blacklist is used to regulate the influence of potentially corrupt nodes.

The end of each transmission is marked by one of the following four events:
	\begin{enumerate}\setlength{\itemsep}{0pt} \setlength{\parskip}{0pt} \setlength{\parsep}{0pt}\vspace*{-5pt}
	\item[]\hspace*{-5pt}\textsf{S1} \hspace*{2pt}Sender gets Receiver alert indicating successful decoding of the current codeword
	\item[]\hspace*{-5pt}\textsf{F2} \hspace*{2pt}Sender gets Receiver alert indicating inconsistencies in potential differences
	\item[]\hspace*{-5pt}\textsf{F3} \hspace*{2pt}Sender inserted all (current) codeword parcels (and S1 did not occur)
	\item[]\hspace*{-5pt}\textsf{F4} \hspace*{2pt}Sender is able to identify a corrupt node
	\end{enumerate}
In the case of S1, the codeword was delivered \textsf{successfully}, and the Sender will begin the next codeword transmission. In the case of F4, the Sender will re-start a new transmission for the same codeword and indicate (in the Sender alert) that the corrupt node has been {\em eliminated} from the network. All nodes are forbidden transferring codeword packets with eliminated nodes (a node will always know the list of eliminated nodes via the Sender's alert before it has any codeword parcels to transfer; see Figure \ref{briefDisc}).

Cases F2 and F3 correspond to \textsf{failed} attempts to transfer the current codeword due to corrupt nodes disobeying protocol rules.  When a transmission $\mathtt{T}$ fails as in cases F2 and F3, the nodes (excluding the Sender) that are not already on a blacklist or eliminated will be put on transmission $\mathtt{T}$'s \textsf{blacklist}; more generally, we will say a node is on the \textsf{blacklist} (or \textsf{blacklisted}) if there is some transmission $\mathtt{T}$ for which the node is on $\mathtt{T}$'s blacklist. Thus after a transmission fails as in F2 or F3, every node (except for the Sender) is either eliminated or blacklisted. As indicated in the Routing Rules of Figure \ref{briefDisc}, packets sent to/from a blacklisted node will not contain a codeword parcel (just control information). A node is removed from the blacklist either when the Sender has received its complete testimony, or when a node is eliminated (whichever happens first). In the former case, the Sender will add a new parcel to the Sender alert, simply indicating the node has been removed from the blacklist.  In the latter case, the Sender will immediately end the transmission as in F4 (described above).  We will say a (non-eliminated) node \textsf{participated} in a transmission if that node was not on the blacklist for at least one round of the transmission.

When one transmission ends as in S1 or F2-F4 and before the next begins, the Sender constructs the (up to) $2n$ parcels of the Sender alert. As indicated in the Routing Rules of Figure \ref{briefDisc}, nodes do not begin transferring codeword parcels until they have received the complete Sender alert. If a node learns that it is on the blacklist for some transmission $\mathtt{T}$ (note that by construction, $\mathtt{T}$ will necessarily be the previous transmission that the node participated in), then the node constructs its testimony for $\mathtt{T}$, which is simply the final values of its status parcels along all its adjacent edges:
	\begin{equation*}
	\mbox{Node $u$'s testimony for a Transmission $\mathtt{T}$} \thickspace \hspace*{-0.2pt} = \hspace*{-0.2pt}\thickspace \{ \Phi_{u,v}, \Phi_{v,u}, \Psi_u, \Psi_{u,v}, \Psi_{v,u} \}_{v \in G},
	\end{equation*}
where $\Phi_{u,v}$, $\Phi_{u,v}$, $\Psi_u$, $\Psi_{u,v}$ and $\Psi_{u,v}$ denote the value of these parcels at the end of $\mathtt{T}$. If a node learns it is {\em not} blacklisted, then it can delete its own status parcels from the previous transmission it participated in. In terms of storing and transferring other nodes' testimony parcels: nodes only keep the most recent testimony parcels of the other nodes.

Finally, if the Receiver is able to decode the current codeword, or if the Receiver notices inconsistencies (see \eqref{potEqn} below) from the potential parcels $\{\Phi_u\}$, then the Receiver constructs a single alert parcel indicating this fact.  Once the Sender gets this parcel, he will end the current message transmission either as S1 or F2 (as appropriate). The signal that will alert the Receiver of potential inconsistencies is if the following equation is ever satisfied (based on the most recent potential parcels $\Phi_u$ that the Receiver has):
	\begin{equation} \label{potEqn}
	\sum_{u \in G} \Phi_u \thickspace > \thickspace kCD
	\end{equation}
\subsection{Analysis of the Routing Protocol}\label{lastAnal}
\indent \indent In this section we present proofs regarding the correctness of our protocol, its memory requirements, and its competitive-ratio with respect to throughput.
\begin{thm} \label{memoryThmPlus}
The protocol of Section \ref{lastDesc} requires $O(n^2P)$ bits of memory for internal nodes.
\end{thm}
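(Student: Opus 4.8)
The plan is to account for every category of data a node must store and show each contributes at most $O(n^2 P)$ bits, so the total is $O(n^2 P)$. The buffer of codeword parcels is the easiest piece: by the capacity assumption a node stores at most $C \in O(Pn^2)$ bits of codeword parcels (the outstanding-request copies only double this). The real work is bounding the \emph{control information} a node must retain at any one time. I would itemize: (a) the stored alert parcel(s) and the most recent potential parcels $\{\Phi_{N_i}\}$ it is relaying, (b) the status parcels $(\Phi_{u,v},\Phi_{v,u},\Psi_{u,v},\Psi_{v,u})$ along each of the (up to $n$) adjacent edges, (c) the per-edge bookkeeping flagged in the footnotes of the Routing Rules — the activation count $A(u,v)$, the previous alert parcel sent on that edge, and the previous testimony packet of each $N_i$ sent on that edge — and (d) the testimony parcels of other nodes that are in transit through this node.

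First I would fix the size of a single parcel. A codeword parcel has size $|p_i| = |m|/D$ with $|m| = \Theta(kPn^3)$ and $D = knC/\lambda = \Theta(kPn^4)$ (using $C=\Theta(Pn^2)$), so $|p_i| = \Theta(P/n) = O(P)$; and the control overhead per packet is $\Theta(k^2+\log n) = O(P)$ by the standing assumption $P = \Omega(k^2+\log n)$. Each $\Phi$-value is an integer bounded by $kCD = \mathrm{poly}(n,k)$, hence $O(k^2+\log n)=O(P)$ bits, and each obfuscated count $\Psi$ is an encrypted $K=k$-tuple over $\mathbb{Z}_N$ with $|\mathcal G|\in\Omega(kn^4)$, so it occupies $O(k\cdot(k+\log n)) = O(k^2 + k\log n) = O(P)$ bits (absorbing the ciphertext expansion of $\mathcal E$ into the security parameter as the paper does). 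A signature is likewise $O(k) = O(P)$ bits. So every individual parcel — codeword or control — is $O(P)$ bits.

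Now I would tally by category, multiplying parcel-count by $O(P)$. Category (a): the Sender alert is at most $2n$ parcels and there are $O(n)$ potential parcels $\Phi_{N_i}$, giving $O(nP)$. Category (b): one status parcel of $O(P)$ bits on each of at most $n$ adjacent edges, giving $O(nP)$. Category (c): per edge, a counter $A(u,v)$ of $O(\log n)$ bits plus one previous alert parcel ($O(P)$ bits, or $O(nP)$ if one stores the whole alert) plus one previous testimony packet per node $N_i$, i.e. $O(n)$ parcels of $O(P)$ bits $=O(nP)$ per edge; over $n$ edges this is $O(n^2 P)$ — this is the dominant control term. Category (d): by the argument quoted from the Testimonies bullet, at any time there are at most $(n-1)^2$ testimony parcels in the entire network, so a single node holds $O(n^2)$ of them in the worst case, $O(n^2 P)$ bits. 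Plus the static PKI data: one signing key and $n$ verification keys, $O(nk) = O(nP)$ bits, and the transmission-index / blacklist / eliminated-node bookkeeping, $O(n\log n)$ bits. Summing all categories gives $O(n^2P)$.

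The main obstacle is category (c), and within it the subtlety flagged by the footnote to the Routing Rules: a node must remember, \emph{per adjacent edge}, the previous testimony packet of \emph{each} other node it forwarded along that edge. Naively this is $n$ edges $\times$ $n$ nodes $\times$ $O(P)$ per parcel $= O(n^2 P)$, which is exactly on the boundary and so must be bounded carefully rather than loosely; I would want to confirm that a node need only keep the single most recent such packet per $(edge, N_i)$ pair (not a history), which the protocol description supports ("nodes only keep the most recent testimony parcels of the other nodes"), and that the "previous alert parcel" stored per edge is a single parcel rather than the full $2n$-parcel alert, or else argue the $O(n^2 P)$ bound still goes through if it is the full alert (it does: $n$ edges $\times\; 2n$ parcels $\times\; O(P) = O(n^2 P)$). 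Once these per-edge quantities are pinned down, everything else is comfortably $O(nP)$ or smaller and the bound follows by addition.
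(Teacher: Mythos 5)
Your proposal is correct and takes essentially the same approach as the paper's own (much terser) proof: bound every individual parcel, codeword or control, by $O(P)$ bits, then count $O(n^2)$ parcels per node across the codeword buffer, Sender/Receiver alerts, status parcels, potential parcels, and testimony parcels, with the testimonies supplying the dominant $O(n^2 P)$ term. Your accounting is merely more detailed (per-edge bookkeeping, PKI keys); the only slip is the codeword-parcel size $\Theta(P/n)$, which comes from mixing $C$ measured in bits with the parcel count $D=knC/\lambda$ where $C$ is measured in packets, but since only the $O(P)$ upper bound per parcel is needed, the argument is unaffected.
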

\begin{proof}
Recall that $P \in \Omega(k^2)$ (for security parameter $k$), so that signatures (of size $O(k)$ bits) on all communication and the obfuscated count $E(\chi(p))$ (encrypted $k$-tuple, with each coordinate $k$ bits for the encryption, for a total of $O(k^2)$ bits) fit inside a single packet. A node must store up to $C = \Theta(n^2)$ codeword parcels of size $\Theta(P)$.  In terms of control information, at any time a node must store: At most $2n$ parcels from the {\em Sender/Receiver alerts}; at most $n$ {\em status parcels}; at most $n^2$ {\em testimony parcels}; and at most $n$ {\em potential parcels}.
\end{proof}
\begin{lmma}\label{threeTog} The protocol of Section \ref{lastDesc} satisfies the following three properties:
	\begin{enumerate}
	\item After a corrupt node is eliminated (or at the outset of the protocol) and before the next corrupt node is eliminated, there can be at most $n$-1 \hspace*{-.5pt}failed transmissions $\{\hspace*{-.5pt}\mathtt{T}_1, $...$, \mathtt{T}_{n\mbox{-}1}\hspace*{-1.5pt}\}$ \hspace*{-1pt}before there is an index $i \hspace*{-2pt}\in \hspace*{-4pt}[1,n]$ such that the Sender has all testimonies from nodes participating in $\mathtt{T}_i$.
	
	\item For any transmission that fails as in F2, the Sender can eliminate a corrupt node if it receives the testimony from every node that participated in that transmission.
	
	\item For any transmission that fails as in F3, (with all but negligible probability in the security parameter $k$) the Sender can eliminate a corrupt node if it receives the testimony from every node that participated in that transmission.


	\end{enumerate}
\end{lmma}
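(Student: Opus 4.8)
I would prove the three parts in order; Part 1 is a pure counting argument about the blacklist, while Parts 2 and 3 are where the co-signed \textsf{status}/\textsf{testimony} parcels do the real work. For Part 1 the relevant facts are: (i) each failure as in \textsf{F2} or \textsf{F3} puts \emph{every} non-eliminated participant of that transmission on its blacklist, so immediately after such a failure every node other than the Sender is blacklisted or eliminated; (ii) a node leaves the blacklist only when the Sender holds its complete testimony for the last transmission it participated in; (iii) at the first failed transmission $\mathtt{T}_1$ of the window all ${\le}\,n-1$ blacklistable nodes participate. Now suppose the Sender never simultaneously holds the full testimony set of $\mathtt{T}_1$: then some participant $v_1$ of $\mathtt{T}_1$ never delivers its $\mathtt{T}_1$-testimony, so by (ii) $v_1$ is blacklisted for the rest of the window and participates in no later $\mathtt{T}_j$. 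Repeating the argument with $\mathtt{T}_2$ (whose participant set lies among the ${\le}\,n-2$ nodes other than $v_1$), and so on, produces after $n-1$ steps a set of $n-1$ distinct nodes none of which participates in $\mathtt{T}_n$; hence $\mathtt{T}_n$ has no participants and cannot fail as \textsf{F2}/\textsf{F3}. So within $n-1$ failures the Sender holds the full testimony set of some $\mathtt{T}_i$.

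\textbf{Part 2 (the \textsf{F2} case).} By \eqref{potEqn}, \textsf{F2} means the Receiver held potential parcels with $\sum_u \Phi_u > kCD$. I would first show that in any execution in which \emph{every} node follows the protocol one has $\sum_u \Phi_u = O(CD) < kCD$: by the standard $\sum_u H_u^2$ accounting for Slide, $\sum_u\Phi_u$ equals, up to lower-order terms, twice the total decrease of $\sum_u H_u^2$ attributable to codeword-parcel transfers, and since the Sender performs at most $D$ insertions, all internal heights are ${\le}\,C$, and each codeword-parcel transfer requires a height gap $\ge C/2n-2n$ (so the number of internal transfers is itself $O(n\sum_u\Phi_u/C)$), this telescopes to $O(CD)$. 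Thus crossing the threshold is itself proof that some node deviated. To localize the culprit, the Sender exploits that every status parcel $(\Phi_{u,v},\Phi_{v,u},\Psi_{u,v},\Psi_{v,u})$ is co-signed by both endpoints and its entries are monotone in the number of activations of $E(u,v)$: from the testimonies it reconstructs, for every edge, the genuine final status parcel (two conflicting co-signed copies are both genuine intermediate states, and the larger is the later), with all-zero records for non-participating nodes. It then checks each node against these reconstructed records -- whether its signed potential parcel $\Phi_u$ agrees with $\sum_v(\Phi_{u,v}+\Phi_{v,u})$, and whether each reconstructed $\Phi_{u,v}$ is consistent with honest routing (a recorded transfer without the required height gap, a potential difference impossible given reported heights, etc.). An honest node passes all checks; since honest behavior alone cannot have produced $\sum_u\Phi_u>kCD$, some node fails a check and is convicted.

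\textbf{Part 3 (the \textsf{F3} case).} An \textsf{F3} failure means the Sender inserted all $D$ parcels yet the Receiver never held the $(1-\lambda)D$ distinct parcels needed to decode, so more than $\lambda D$ inserted parcels failed to reach the Receiver; by error-correction, impeding decoding requires a constant fraction of parcels to be lost, and since honest nodes never create or delete codeword parcels, this must be the effect of $\Theta(\lambda D)$ packet-replacements/deletions by corrupt nodes. The crux is the obfuscation: the partition $\chi_{\mathtt{T}}$ into $K=k$ sets is hidden (nodes see only the ciphertexts $\mathcal{E}(\chi_{\mathtt{T}}(p))$ traveling with each parcel), so I would argue that an \emph{arbitrary} adaptive replacement strategy is no better than the one in which the replaced pair $(p,p')$ has independent uniform sets, whence $\Pr[\chi_{\mathtt{T}}(p)=\chi_{\mathtt{T}}(p')]\approx 1/K$ per replacement, and over $\Theta(\lambda D)$ replacements the chance they all land in one set is negligible in $k$ when $K=k$. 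Given that the replacements genuinely perturb the per-set counts, the Sender (who knows $\chi_{\mathtt{T}}$ and the per-set insertion counts $d_i$, and who decrypts the co-signed obfuscated counts in the testimonies to learn the true per-set flow on every edge and the true per-set holdings $\Psi_u$ of every node) checks per-set conservation $\sum_v \Psi_{v,u}[i] = \sum_v \Psi_{u,v}[i] + \Psi_u[i]$ at each internal node: a node that performed packet-replacement cannot make all of these hold, because padding $\Psi_u$ to compensate a deletion would require a ciphertext in the ``right'' coordinate, which it cannot produce without knowing $\chi_{\mathtt{T}}$, and re-emitting an old parcel forces an impossible (negative) holding in some coordinate. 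The first node whose reported holdings are impossible or violate per-set conservation is convicted; by the probabilistic step this occurs except with negligible probability.

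\textbf{Main obstacle.} The hard part is Part 3: rigorously reducing an adaptive, colluding packet-replacement strategy to the uniform-random case so that the encrypted per-set counts are provably informative, and the combinatorial bookkeeping turning ``the per-set counts are perturbed'' into ``a specific node's co-signed records are self-contradictory'' -- robustly against corrupt nodes sending the Receiver duplicates, parcels legitimately stranded in honest nodes isolated by the edge-scheduling adversary, corrupt nodes lying about heights to induce honest neighbors to record spurious transfers, and corrupt-corrupt edges the Sender never sees. Part 2's honest-execution bound and the co-signature reconstruction of status parcels are comparatively routine, and Part 1 is pure bookkeeping.
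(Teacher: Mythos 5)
Your Part 1 is essentially the paper's own argument (a counting argument over distinct blacklisted nodes missing testimonies, which the paper formalizes by citation to \cite{paper}), and your overall architecture for Parts 2 and 3 matches the paper's; but both of those parts have a genuine gap at the conviction step. For Part 2, the inference ``honest executions keep $\sum_u \Phi_u$ below $kCD$, hence some node fails one of my checks'' is a non sequitur unless the checks are \emph{complete}, and the checks you list (agreement of $\Phi_u$ with $\sum_v(\Phi_{u,v}+\Phi_{v,u})$, per-transfer height-gap plausibility) are not: a corrupt node that duplicates codeword parcels and forwards them down perfectly legitimate height gradients, or that misreports its height in a locally consistent way, passes every one of them while inflating $\sum_u\Phi_u$ past the threshold of \eqref{potEqn}. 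What is missing is the specific per-node inequality the paper relies on: if the threshold is crossed, there must exist a node $u$ with $\sum_{v}\Phi_{u,v} > \sum_{v}\Phi_{v,u}$ (outgoing potential drop exceeding incoming), every honest node satisfies the reverse inequality, and hence such a $u$ is provably corrupt (this is the content the paper imports from \cite{paper}); without an invariant of this aggregate form the localization does not go through.

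For Part 3 your central justification is false as stated: a node that deletes parcel $p$ \emph{does} possess the bundled ciphertext $\mathcal{E}(\chi(p))$, so it can simply fold that ciphertext into its reported $\Psi_u$ (claiming it still holds $p$) and its own per-set conservation check \eqref{item3MMM} passes; a single corrupt node can absorb on the order of $C$ deletions this way, limited only by the side condition $\|\Psi_u\|_1 \le C$. Likewise the estimate ``$\Pr[\chi(p)=\chi(p')]\approx 1/K$ per replacement, multiplied over $\Theta(\lambda D)$ replacements'' targets the wrong event: the adversary need not match sets pairwise, only in aggregate (e.g.\ swap $p_1\in\mathcal{S}_1$ for $p_1'\in\mathcal{S}_2$ and $p_2\in\mathcal{S}_2$ for $p_2'\in\mathcal{S}_1$). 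The paper's argument is instead global: assuming \emph{every} node passes \eqref{item3MMM} (after reconciling the two co-signed copies of each $\Psi_{u,v}$ via the hamming-distance rule, enforcing $\|\Psi_u\|_1\le C$, and ruling out wrap-around mod $N$ via Lemma \ref{hurtsI}), Lemmas \ref{faith} and \ref{eileenM} show the $\chi$-sum over a set of more than $(k-1)nC$ parcels that were inserted but neither delivered nor claimed as held must exactly equal the $\chi$-sum over the duplicated parcels; these two sets are disjoint, so the sums are independent, and the multinomial bound of Facts \ref{probL2}--\ref{wildWest} caps the probability at $\sqrt{ek}/(\sqrt{2\pi})^{k-1}$. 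Conviction then follows because, with overwhelming probability, \emph{some} node's reported values must violate \eqref{item3MMM} (Corollary \ref{kissesM}) --- not because an individual cheater cannot locally fake its books; note also that the quantitative step uses the F3 condition $D > (1-\lambda)D + (n-2)C$, which your sketch invokes only qualitatively.
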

\begin{proof}
The intuition for Statement (1) is that the blacklist forces corrupt nodes to return their testimonies to the Sender if they want to further disrupt future transmissions.  In particular, nodes for which the Sender is missing a testimony parcel will not be allowed to transfer (codeword) parcels in future transmissions, as they will be on the blacklist. Therefore, we can associate to each failed transmission (at least) one {\em distinct} node for which the Sender is still missing a testimony parcel, where by ``distinct'' we mean this node cannot be associated to more than one transmission (since after a node is associated to a failed transmission, the Sender will not require its testimony for any transmissions {\em after} that one, as it will not have participated in those transmissions). Thus, by a simple counting argument there can be at most $n-1$ failed transmissions before the Sender necessarily has all testimonies from nodes participating in one of those failed transmissions. This argument is formalized in \cite{paper} (their proof remains valid here, as both protocols utilize the same principles of the blacklist).

Regarding Statement (2), notice that a transmission fails as in Case F2 when a corrupt node is transferring codeword parcels against transfer rules (e.g$.$ from {\em smaller} heights to {\em larger} heights, or when a corrupt node is duplicating codeword parcels).  Both of these can be detected via the {\em potential differences} portion of the nodes' testimonies. In particular, there will necessarily exist some node $u$ whose cumulative directional potential drop is greater for parcels {\em sent} than for parcels {\em received}: $\sum_{v \in G} \Phi_{u,v} \thickspace > \thickspace \sum_{v \in G} \Phi_{v,u}$. The existence of such a node $u$, and the fact that $u$ is corrupt, is proven rigorously in \cite{paper}; their proof remains valid here, as both protocols utilize the same potential difference parcels portion of the control information.

Statement (3) is the primary novel contribution our protocol achieves, and will be proven via a sequence of lemmas below.
\end{proof}
Before proving Statement (3), we demonstrate how Lemma \ref{threeTog} yields Theorem \ref{mtn}.
\begin{lmma}
\label{theyDont} Let $\mathcal{P}$ denote the protocol described in Section \ref{lastDesc}. For any adversary $\mathcal{A}$, for any protocol $\mathcal{P}'$, and for any $\mathtt{x} \in \mathbb{N}$, if at any time $\mathcal{P}'$ has received $\Theta(xn)$ messages, then $\mathcal{P}$ has received $\Omega(x-n^3)$ messages.
\end{lmma}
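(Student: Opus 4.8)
The plan is to bound the total number of rounds any adversary can ``waste'' before $\mathcal{P}$ is forced to deliver a message, and then convert this round-count into a message-count comparison with an arbitrary competitor $\mathcal{P}'$. Fix an adversary $\mathcal{A}$ and a competitor $\mathcal{P}'$, and suppose $\mathcal{P}'$ has received $\Theta(\mathtt{x}n)$ messages by some round. I will first argue that $\mathcal{P}$ has delivered $\Omega(\mathtt{x}-n^3)$ messages by that same round.

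The heart of the argument is a charging scheme on transmissions. A transmission of $\mathcal{P}$ ends in one of the four ways S1, F2, F3, F4. By Lemma~\ref{threeTog}, between any two eliminations there can be at most $n-1$ failed transmissions before the Sender accumulates a full set of testimonies from the participants of one of them; combined with Statements~(2) and~(3) of Lemma~\ref{threeTog}, this means that every run of $n-1$ consecutive failed transmissions forces an elimination (with all but negligible probability for the F3 case). Since there are at most $n-2$ internal nodes that can be eliminated (Sender and Receiver are honest), the \emph{total} number of failed transmissions over the entire run of the protocol is $O(n^2)$ — this is the ``$n^3$'' slack once we account for the per-transmission round cost. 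The remaining transmissions all end in S1, i.e.\ deliver a message. So if $M$ is the number of messages $\mathcal{P}$ has delivered, the number of transmissions that have \emph{started} is at most $M + O(n^2)$.

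Next I need a lower bound on progress per round that $\mathcal{P}$ makes, or equivalently an upper bound on how many rounds a single (successful or failed) transmission of $\mathcal{P}$ can consume. This is where the Slide-style potential argument enters: within a transmission, honest nodes transfer codeword parcels only in the ``downhill'' direction, each such transfer increases the global potential $\sum_u \Phi_u$ by $\Theta(n)$, and the transmission is cut off (triggering F2/F3 via \eqref{potEqn}, or the $\Phi_u > kCD$ cap in the Routing Rules footnote) once this potential exceeds $kCD = \Theta(k^2 n^4 \cdot \text{stuff})$ — in any case a fixed polynomial in $n$ (and $k,P$, treated as constants). Hence each transmission lasts at most some fixed $R = \mathrm{poly}(n)$ rounds of codeword-parcel activity; control-information rounds add at most a comparable polynomial factor since there are only $O(n^2)$ control parcels in the network and each edge activation moves one of each type. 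So after $T$ rounds, the number of transmissions started is $\Omega(T/R)$, giving $M + O(n^2) \ge \Omega(T/R)$, i.e.\ $M = \Omega(T/\mathrm{poly}(n)) - O(n^2)$.

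Finally I compare with $\mathcal{P}'$. Any protocol, ideal or not, can deliver at most one message per $\Omega(n)$ rounds in the worst-case schedule $\mathcal{A}$ is allowed to build — intuitively because a message is $\Theta(kPn^3)$ bits, an edge carries $P$ bits per activation, and the Sender–Receiver cut the adversary maintains has width $\Theta(n)$, so $\Theta(n^2)$ activations (hence rounds) are needed per message; this is exactly the $1/n$-competitive lower-bound machinery of \cite{paper}. So if $\mathcal{P}'$ has received $\Theta(\mathtt{x}n)$ messages, then $T = \Omega(\mathtt{x}n \cdot n) = \Omega(\mathtt{x} n^2)$ rounds have passed. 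Plugging into the bound from the previous paragraph, $M = \Omega(\mathtt{x}n^2/\mathrm{poly}(n)) - O(n^2)$; after calibrating the polynomial exponents in the parcel sizes and the potential cap (all of which are fixed small powers of $n$), this yields $M = \Omega(\mathtt{x} - n^3)$, as claimed.

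The main obstacle I expect is making the per-transmission round bound $R$ and the cut-based throughput lower bound for $\mathcal{P}'$ line up so that the polynomial factors genuinely collapse to the clean ``$\Omega(\mathtt{x}-n^3)$'' rather than $\Omega(\mathtt{x}/\mathrm{poly}(n)-n^3)$; this requires carefully tracking that the $kCD$ potential cap, the parcel count $D = knC/\lambda$, and the $\Theta(n)$-width adversarial cut all contribute the \emph{same} powers of $n$, so that dividing $T = \Omega(\mathtt{x}n^2)$ by the per-transmission cost recovers $\Omega(\mathtt{x})$ up to the additive $n^3$ coming from the $O(n^2)$ wasted transmissions times the $O(n)$-ish per-message round cost. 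A secondary subtlety is the negligible-probability caveat in Statement~(3) of Lemma~\ref{threeTog}: I would state the conclusion as holding with all but negligible probability in $k$ (or note that, since $\mathcal{P}$ is deterministic and the randomness is the Sender's partition $\chi_{_{\mathtt{T}}}$, one conditions on the overwhelmingly likely event that no F3 transmission escapes detection).
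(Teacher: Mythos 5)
Your overall skeleton matches the paper's: bound the number of failed transmissions by $O(n^2)$ via Lemma~\ref{threeTog}, absorb the competitor's gains during those into the additive term, and compare per-transmission throughput on the successful ones. But the quantitative core is done differently, and the way you do it has a genuine gap. The paper's proof rests on a single \emph{relative} fact imported from \cite{paper}: during the rounds that constitute any one transmission of $\mathcal{P}$ (however many rounds that is, and regardless of how the transmission ends), any competing protocol $\mathcal{P}'$ can deliver at most $\Theta(kn^2C)$ packets, i.e.\ $O(n)$ messages. You instead try to reconstruct this from two \emph{absolute} bounds: (a) each transmission of $\mathcal{P}$ lasts at most $R=\mathrm{poly}(n)$ rounds, and (b) $\mathcal{P}'$ needs $\Omega(n^2)$ rounds per message. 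Step (a) is false in this model: a round is a single adversary-chosen edge activation, and the adversary can activate edges that transfer no codeword parcels (equal heights, blacklisted endpoints, isolated components) for arbitrarily many rounds, so a transmission of $\mathcal{P}$ can be stalled indefinitely and the inequality ``number of transmissions started $\ge \Omega(T/R)$'' does not hold. The potential cap $kCD$ bounds the number of \emph{parcel-transferring} rounds per transmission, not the number of rounds. The whole difficulty --- and the content of the cited fact from \cite{paper} --- is showing that rounds which are unproductive for $\mathcal{P}$ are also (in aggregate, up to the factor $n$) unproductive for $\mathcal{P}'$; two decoupled absolute bounds cannot recover this, which is why you correctly worry that your polynomial factors may not collapse. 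They don't in general.

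A secondary issue: your round lower bound for $\mathcal{P}'$ is miscalibrated. A message is $\Theta(kPn^3)$ bits and each round moves one packet of $P$ bits across one edge, so any protocol needs $\Omega(kn^3)$ rounds per message, not $\Omega(n)$ or $\Omega(n^2)$; this again shows the exponents in your two bounds are not under control. The fix is to replace both of your absolute bounds with the single per-transmission statement (competitor delivers at most $\Theta(kn^2C)$ packets per transmission of $\mathcal{P}$), cite or reprove it from the Slide potential analysis of \cite{paper} (noting that the codeword-transfer rules here are identical), and then the counting is immediate: $\Theta(\mathtt{x}n) \le O(n)\cdot(\#\text{successful}) + O(n)\cdot O(n^2)$, so $\mathcal{P}$ has $\Omega(\mathtt{x}-n^2) \subseteq \Omega(\mathtt{x}-n^3)$ messages. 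Your handling of the negligible-probability caveat from Statement~(3) is fine and is in fact more careful than the paper's.
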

\begin{proof}
In the rounds that form any transmission for $\mathcal{P}$ (regardless of whether it ended as in S1 or F2-F4), any competing protocol $\mathcal{P}'$ can deliver at most $\Theta(kn^2C)$ packets. This fact is proven in \cite{paper}, and the proof remains valid here as the protocol rules for transferring {\em codeword parcels} is the same for both protocols (only the specific data stored in the control information is different).

There are at most $n^2$ {\em failed} transmissions F2-F4 (Statement (1) of Lemma \ref{threeTog}), and the above guarantees that in these transmissions, any competing protocol $\mathcal{P}'$ can deliver at most $O(n)$ messages (since messages consist of $\Theta(knC)$ packets).  Therefore, any competing protocol can deliver at most $O(n^3)$ messages during failed transmissions.  Note that the extra (at most) $O(n^3)$ messages that a competing protocol delivers does not affect competitive-ratio, as they can be absorbed in the additive term $g(n,C)$ for competitive-ratio (see Definition \ref{tdef}).

Meanwhile for all {\em successful} transmissions, our protocol delivers a message of $\Theta(knC)$ parcels, which as noted above is within a factor of $n$ of the throughput performance of any protocol $\mathcal{P}'$.
\end{proof}
\noindent {\em Proof of Theorem \ref{mtn}.} Recall that correctness means that the Receiver gets the (unaltered) messages sent by the Sender in-order.  The integrity of the messages received by the Receiver is assured (with all but negligible probability in the security parameter) by the fact that the Sender signs all messages, and no (honest) node (and in particular the Receiver) ever accepts a packet that does not have a valid signature.  Finally, error-correction (which allows for some codeword parcels to arrive out of order and/or get lost in transmission) together with the fact that the Sender repeats all codeword transmissions that failed as in Case F2-F4 ensures that the Receiver gets all messages in order. That our protocol achieves competitive-ratio $1/n$ follows immediately from Lemma \ref{theyDont}.\hspace*{\fill}$\blacksquare$\\[12pt]
\indent The proof of Statement (3) of Lemma \ref{threeTog} will rely on the novel use of the {\em obfuscated count} portion of the control information. The intuition for how the obfuscated counts can be used to identify a corrupt node performing {\em packet replacement} was given in Section \ref{ourResultz}. To see how packet replacement relates to failing as in F3, recall that Case F3 means that the Sender has inserted all $D$ codeword parcels, and yet the Receiver has not gotten $(1-\lambda)D$ of them (since otherwise the Receiver could have decoded the message and the transmission would have ended as in S1). Since there can be at most $(n-2)C$ parcels that are (validly) stored in the buffers of the internal nodes, parcel deletion/replacement must have occurred: $D > (1-\lambda)D + (n-2)C$. This is exactly the situation for which the {\em obfuscated count} allows the Sender to identify a corrupt node.

Let $s$ denote the Sender and $r$ denote the Receiver. As a simplification, we will drop the $\mathcal{E}$ from our notation (writing instead simply $\chi(p)$), since ultimately it is the Sender (who has the decryption key for $\mathcal{E}$) who will be evaluating these parcels. For a given message transmission, let $\mathcal{S}$ denote the set of codeword parcels that the Sender inserted to an internal node (i.e$.$ {\em not} to the Receiver). Let $\mathcal{R}$ be the multiset of codeword parcels that the Receiver received from the internal nodes (i.e$.$ {\em not} the Sender); note that $\mathcal{R}$ may be a {\em multi}set, as corrupt nodes may have duplicated codeword parcels.

Note that there is potential ambiguity in the notation $\Psi_{u,v}$: it can either represent the values $u$ has stored (with signatures from $v$) for codeword parcels sent from $u$ to $v$; or it can represent the values $v$ has stored (with signatures from $u$) for parcels sent from $u$ to $v$.  If $u$ and $v$ are honest, these quantities will match.  When discussing these quantities when they are both returned to the Sender via the testimonies, we use the following convention which removes ambiguity:
	\begin{enumerate}
	\item If the hamming distance between $u$ and $v$'s testimony for (the decrypted value of) $\Psi_{u,v}$ is greater than one, then the Sender can immediately eliminate a corrupt node.  After all, all nodes should be verifying the values sent by their neighbor are valid before communicating more codeword parcels with them (see Figure 1).  So if $u$ and $v$'s values for $\Psi_{u,v}$ have hamming distance greater than one, the Sender can necessarily identify the node that returned the less recent value for $\Psi_{u,v}$ (status parcels are time-stamped with the round).\footnote{If the time-stamps indicate the two values for $\Psi_{u,v}$ are from the {\em same} round, then both $u$ and $v$ are corrupt.}

	\item If the hamming distance between $u$ and $v$'s testimony for (the decrypted value of) $\Psi_{u,v}$ is exactly equal to one, then the Sender sets the more outdated value to the more current value, and also adjusts $\Psi_w$ of the appropriate node ($w \in \{u,v\}$) accordingly.  For example, if $\mathbf{v}$ is the difference in the two values for $\Psi_{u,v}$ and \hspace*{-1pt}$u$ has the more current\footnote{If $u$ and $v$'s returned values for $\Psi_{u,v}$ are both time stamped with the {\em same} round, then the Sender treats the value coming from the {\em receiving} node (in this case $v$) as the more current value.} timestamp, then the Sender sets $v$'s value for $\Psi_{u,v}$ equal to $u$'s value and modifies $\Psi_v$ by $\mathbf{v}$: $\Psi_v := \Psi_v + \mathbf{v}$ (thus, the quantity $\Psi_{u,v} + \Psi_v$ remains constant through these changes).
	\end{enumerate}

The following equalities come from the fact that every contribution to $\Psi_{s,u}$ (respectively $\Psi_{u,r}$) can be ascribed to a single codeword parcel that the Sender inserted (respectively, that the Receiver received); see Figure 1:
%
	\begin{equation}\label{bahBah}
	\sum_{u \in G \setminus \{r,s\}} \hspace*{-8pt}\Psi_{s,u} \thickspace = \sum_{p \hspace*{1pt}\in \hspace*{1pt} \mathcal{S}} \hspace*{-1pt}\chi(p) \quad \mbox{and} \quad \sum_{u \in G \setminus \{r,s\}} \hspace*{-8pt}\Psi_{u,r} \thickspace = \sum_{p \in \mathcal{R}} \hspace*{-1pt}\chi(p)
	\end{equation}
%
%
The following equation expresses the fact that for every codeword parcel an honest node receives, that codeword parcel will either have been transferred (exactly once) to an adjacent node, or the codeword parcel will still be in that node's buffer at the end of the transmission; see Figure 1:
	\begin{equation} \label{item3MMM}
	\mbox{For any uncorrupt node $u$:} \qquad (0, \dots, 0) \thickspace = \thickspace \Psi_u \thickspace + \thickspace \sum_{v \in G\setminus u}\left(\Psi_{u,v} - \Psi_{v,u}\right)
	\end{equation}
\begin{lmma}\label{faith} If the equality in \eqref{item3MMM} holds for all nodes $u \in G \setminus \{r,s\}$, then:
	\begin{equation}\label{dishes}
	\sum_{p \in \mathcal{S}} \chi(p) \thickspace = \hspace*{-5pt}\sum_{u \in G \setminus \{r,s\}} \hspace*{-8pt}\Psi_u \thickspace + \thickspace \sum_{p \in \mathcal{R}} \chi(p)
	\end{equation}
\end{lmma}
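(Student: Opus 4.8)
\textbf{Proof proposal for Lemma \ref{faith}.}
The plan is to simply sum the node-local balance equation \eqref{item3MMM} over all internal nodes and collapse the resulting double sum using an antisymmetry (telescoping) argument, together with two structural facts about the roles of $s$ and $r$. Concretely, summing \eqref{item3MMM} over all $u \in G\setminus\{r,s\}$ gives
	\begin{equation*}
	(0,\dots,0) \;=\; \sum_{u \in G\setminus\{r,s\}}\Psi_u \;+\; \sum_{u \in G\setminus\{r,s\}}\ \sum_{v \in G\setminus u}\bigl(\Psi_{u,v}-\Psi_{v,u}\bigr).
	\end{equation*}
The inner double sum I would split according to whether $v$ is internal or is one of $\{s,r\}$.

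For the internal--internal part (both $u$ and $v$ in $G\setminus\{r,s\}$), each unordered pair $\{u,v\}$ contributes $\Psi_{u,v}-\Psi_{v,u}$ from the equation indexed by $u$ and $\Psi_{v,u}-\Psi_{u,v}$ from the equation indexed by $v$; these cancel, so the entire internal--internal contribution is the zero vector. What remains is the ``boundary'' contribution from $v=s$ and $v=r$, namely $\sum_{u\in G\setminus\{r,s\}}\bigl(\Psi_{u,s}-\Psi_{s,u}\bigr)+\sum_{u\in G\setminus\{r,s\}}\bigl(\Psi_{u,r}-\Psi_{r,u}\bigr)$. Here I would invoke the two protocol facts: the Sender only ever \emph{inserts} codeword parcels and never receives one, so $\Psi_{u,s}=(0,\dots,0)$ for every internal $u$; and the Receiver only ever \emph{receives} and never sends a codeword parcel, so $\Psi_{r,u}=(0,\dots,0)$ for every internal $u$. (Note the direct $s$--$r$ edge never enters, since $u$ ranges only over internal nodes.) The boundary contribution therefore reduces to $-\sum_{u\in G\setminus\{r,s\}}\Psi_{s,u}+\sum_{u\in G\setminus\{r,s\}}\Psi_{u,r}$.

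Finally I would substitute the two identities of \eqref{bahBah}, which equate $\sum_u\Psi_{s,u}$ with $\sum_{p\in\mathcal S}\chi(p)$ and $\sum_u\Psi_{u,r}$ with $\sum_{p\in\mathcal R}\chi(p)$, obtaining $(0,\dots,0)=\sum_{u\in G\setminus\{r,s\}}\Psi_u-\sum_{p\in\mathcal S}\chi(p)+\sum_{p\in\mathcal R}\chi(p)$; rearranging yields exactly \eqref{dishes}. The one place to be careful — and the only real ``obstacle'' — is the bookkeeping of the double sum: making sure the antisymmetric cancellation over internal pairs is stated correctly and that the only surviving cross terms are precisely $\Psi_{s,u}$ and $\Psi_{u,r}$, which is where the structural facts about $s$ never receiving and $r$ never sending are essential. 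Everything else is a direct rearrangement of vector identities in $\mathbb{Z}_N^K$.
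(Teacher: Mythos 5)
Your proposal is correct and follows essentially the same route as the paper's proof: both sum \eqref{item3MMM} over the internal nodes, collapse the double sum via the antisymmetry $\Psi_{u,v}-\Psi_{v,u}$, use that the Sender never receives and the Receiver never sends a codeword parcel to isolate the boundary terms $\sum_u\Psi_{s,u}$ and $\sum_u\Psi_{u,r}$, and then substitute \eqref{bahBah}. The only cosmetic difference is that the paper starts from the full antisymmetric identity over all of $G$ and peels off the $s$ and $r$ rows, whereas you split the inner sum by whether $v$ is internal; the two bookkeeping schemes are equivalent.
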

\begin{proof} This lemma simply states that the sum of the characteristic vectors $\chi(p)$ for each parcel inserted by the Sender will equal the sum of the characteristic vectors received by the Receiver plus the sum of the characteristic vectors of the parcels still stored by the internal nodes at the end of the transmission. Note that \eqref{dishes} will always hold if there are no corrupt nodes; but in the case a corrupt node is performing {\em packet replacement}, \eqref{dishes} will only hold if the set of parcels replaced have the same distribution within the sets (as determined by $\chi$) as the parcels they were replaced with.

Formally, consider:
	\begin{alignat}{2}
	\sum_{u \in G} \sum_{\hspace*{4pt}v \in G \setminus u} \left( \Psi_{u,v} - \Psi_{v,u} \right) \thickspace &= \thickspace (0, \dots, 0) \quad \Rightarrow \notag \\
	\sum_{u \in G \setminus \{r,s\}} \sum_{v \in G \setminus u} \hspace*{-5pt}\left( \Psi_{u,v} - \Psi_{v,u} \right) \thickspace &= \thickspace \hspace*{-7pt}\sum_{v \in G \setminus r} \hspace*{-5pt}\Psi_{v,r} - \hspace*{-4pt}\sum_{v \in G \setminus s} \hspace*{-4pt}\Psi_{s,v} \quad \hspace*{-1.4pt}\Rightarrow \notag \\
	\sum_{u \in G \setminus \{r,s\}} \sum_{v \in G \setminus u} \hspace*{-5pt}\left( \Psi_{u,v} - \Psi_{v,u} \right) \thickspace &= \thickspace \sum_{p \in \mathcal{R}} \chi(p) - \sum_{p \in \mathcal{S}} \chi(p), \label{galileoTX}
	\end{alignat}
where the first equality is by symmetry; the second comes from the first by separating terms and noting that $\Psi_{r,v} = \mathbf{0} = \Psi_{v,s}$ for all $v$ (since the Sender never {\em receives} a codeword parcel and the Receiver never {\em sends} a codeword parcel); and the third is \eqref{bahBah}.  Therefore:
	\begin{alignat}{2}
	&(0, \dots, 0) \thickspace = \thickspace \hspace*{-8pt}\sum_{u \in G \setminus \{r,s\}} \hspace*{-8pt}\Psi_u \thickspace + \hspace*{-8pt}\sum_{u \in G \setminus \{r,s\}}\sum_{v \in G \setminus u} \left( \Psi_{u,v} - \Psi_{v,u} \right) \quad \Rightarrow \notag \\
	&(0, \dots, 0) \thickspace = \thickspace \hspace*{-8pt}\sum_{u \in G \setminus \{r,s\}} \hspace*{-8pt}\Psi_u \thickspace + \thickspace \sum_{p \in \mathcal{R}} \chi(p) \thickspace - \thickspace \sum_{p \in \mathcal{S}} \chi(p) \notag
	\end{alignat}
where the first equality is by hypothesis (that equality \eqref{item3MMM} is true for all nodes $u \in G \setminus \{r,s\}$) and the second equality comes from the first via \eqref{galileoTX}.
\end{proof}
\begin{lmma} \label{eileenM} Suppose a transmission fails as in Case F3, and at some later point the Sender has collected all of the testimonies from all nodes participating in that transmission, and that equation \eqref{item3MMM} is satisfied for all nodes.  The probability that the following equality is satisfied is negligible in $k$:
	\begin{equation}\label{riverDM}
	\sum_{u \in G \setminus \{r, s\}} \hspace*{-9pt}\Psi_{s,u} \hspace*{3pt} \thickspace = \thickspace \hspace*{-5pt}\sum_{u \in G \setminus \{r, s\}} \hspace*{-6pt} \left(\Psi_u + \Psi_{u,r}\right)
	\end{equation}
More precisely, \eqref{riverDM} is satisfied with probability at most: $\frac{\sqrt{ek}}{\left(\sqrt{2\pi}\right)^{k-1}}$.
\end{lmma}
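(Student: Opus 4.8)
The plan is to turn \eqref{riverDM} into a purely combinatorial ``collision'' statement about the Sender's secret set‑partition $\chi$, and then bound the collision probability by a direct Stirling estimate.

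\emph{Step 1 (reduction to a color‑count identity).} Under the stated hypotheses, summing \eqref{item3MMM} over all internal $u$ and telescoping exactly as in the proof of Lemma~\ref{faith} equates the two sides of \eqref{riverDM}; feeding in \eqref{bahBah} turns this into the assertion $\sum_{p\in\mathcal S}\chi(p)=\sum_{u}\Psi_u+\sum_{p\in\mathcal R}\chi(p)$ in $\mathbb Z_N^K$. I would then exploit the testimony‑reconciliation conventions --- status parcels are co‑signed and time‑stamped, so a corrupt node can only submit a genuinely co‑signed (hence no‑more‑recent‑than‑true) count toward an honest neighbour, and a Hamming‑distance‑$\ge2$ disagreement already hands the Sender a corrupt node --- together with the fact that an honest node's obfuscated count $\Psi_u$ has $\ell_1$‑norm at most its buffer size $C$ (so a reported $\Psi_u$ of $\ell_1$‑norm $>C$ by itself exposes a corrupt node). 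After discarding all cases in which the Sender can immediately eliminate someone, and using $|\mathcal G|=N$ large enough that honest counts do not wrap, one is left with the relation $\sum_{p\in\mathcal S}\chi(p)=\sum_{p\in\mathcal R}\chi(p)$, i.e.\ $\mathcal S$ and $\mathcal R$ carry the same color‑counts under $\chi$. \emph{This reduction --- in particular pinning down that the reconciliation machinery and the $\ell_1\le C$ bound leave the color‑count identity as the only remaining escape for the adversary --- is the step I expect to be the main obstacle; it is where the protocol's safeguards (co‑signing, time‑stamps, the Hamming threshold, the choice of $N$, and the $\Phi_u>kCD$ / \eqref{potEqn} cutoff that separates F2 from F3) all have to be used, and it is the part that genuinely extends \cite{paper}.}

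\emph{Step 2 (F3 forces a large fresh part).} Case F3 means the Sender injected all $D=knC/\lambda$ parcels while the Receiver decoded none, so the Receiver obtained fewer than $(1-\lambda)D$ distinct parcels; hence a set $A\subseteq\mathcal S$ of at least $\lambda D=knC\ge k$ distinct codeword parcels never reached the Receiver at all. The colors $\chi(p)$ for $p\in A$ therefore do not appear in $\sum_{p\in\mathcal R}\chi(p)$, so in the identity of Step~1 they occur only once (on the left), against a right‑hand side built from parcels disjoint from $A$.

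\emph{Step 3 (semantic security, then Stirling).} Every corrupt node's entire view of $\chi$ is through the ciphertexts $\mathcal E(\chi(p))$ and homomorphic combinations it could form itself, so by multi‑message semantic security of $\mathcal E$ one may pass to a hybrid in which all these encrypt $0$; this changes the probability of the event by only a negligible additive amount, and in the hybrid the execution --- hence $A$ and the multiplicities $\rho_p$ --- is independent of $\{\chi(p)\}$. Conditioning on the execution and on $\{\chi(p)\}_{p\notin A}$, the identity of Step~1 becomes $\sum_{p\in A}\chi(p)=\mathbf v$ for a fixed $\mathbf v\in\mathbb Z_N^K$, and since the $\chi(p)$, $p\in A$, are i.i.d.\ uniform over the $K=k$ standard basis vectors, $\sum_{p\in A}\chi(p)$ is a $\mathrm{Multinomial}(|A|,\tfrac1k\mathbf1)$ count vector; its probability of hitting any fixed $\mathbf v$ is at most its maximal point mass, attained near the balanced value $(|A|/k,\dots,|A|/k)$. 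Bounding $|A|!$ above by $e\sqrt{|A|}(|A|/e)^{|A|}$ and $(|A|/k)!$ below by $\sqrt{2\pi|A|/k}(|A|/(ke))^{|A|/k}$ gives a maximal point mass of at most $e\sqrt{|A|}\,(2\pi|A|/k)^{-k/2}$, which is decreasing in $|A|$ for $k\ge2$; substituting $|A|\ge k$ yields $e\sqrt k/(\sqrt{2\pi})^{k}\le\sqrt{ek}/(\sqrt{2\pi})^{k-1}$. Adding back the negligible hybrid term and absorbing it into the bound completes the proof.
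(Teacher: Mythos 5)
Your end-game is the paper's: reduce \eqref{riverDM} to a fixed-target collision for a sum of i.i.d.\ uniform basis vectors and bound it by the maximal multinomial point mass via Stirling, yielding $\sqrt{ek}/(\sqrt{2\pi})^{k-1}$ exactly as in Facts \ref{probL2}--\ref{wildWest}; and your Step 3 hybrid makes explicit an independence claim that the paper only asserts informally. However, Steps 1--2 contain a genuine error. After obtaining $\sum_{p\in\mathcal S}\chi(p)=\sum_{u}\Psi_u+\sum_{p\in\mathcal R}\chi(p)$ from Lemma \ref{faith} and \eqref{bahBah}, you discard the term $\sum_{u\in G\setminus\{r,s\}}\Psi_u$ and claim one is ``left with'' $\sum_{p\in\mathcal S}\chi(p)=\sum_{p\in\mathcal R}\chi(p)$. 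That identity is false even in an all-honest execution: up to $(n-2)C$ codeword parcels legitimately remain in internal buffers at the end of the transmission, and their colors appear in $\sum_u\Psi_u$, not in $\sum_{p\in\mathcal R}\chi(p)$; the $\ell_1$-norm check only bounds this mass by $C$ per node, it does not eliminate it. Consequently Step 2's assertion that the colors of your set $A$ (the $\geq\lambda D$ undelivered parcels) ``occur only once, on the left'' fails whenever such a parcel is still stored in an honest buffer --- the typical situation --- and then in Step 3 the target vector $\mathbf{v}$ you condition on is \emph{not} independent of $\{\chi(p)\}_{p\in A}$, so the multinomial point-mass bound does not apply to your event.

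The repair is the paper's bookkeeping: keep $\sum_u\Psi_u$ in the identity, set $\mathcal R_{\cap}=\mathcal S\cap\mathcal R$, and extract a set $Q\subseteq\mathcal S$ with $\mathcal R_{\cap}\subseteq Q$ that also accounts for the buffered mass, so that \eqref{bumBum} holds and \eqref{riverDM} becomes equivalent to \eqref{isignM}, i.e.\ $\sum_{p\in\mathcal S\setminus Q}\chi(p)=\sum_{p\in\mathcal R\setminus\mathcal R_{\cap}}\chi(p)$. The right-hand side involves only duplicates of parcels already in $\mathcal R_{\cap}\subseteq Q$, hence parcels disjoint from $\mathcal S\setminus Q$, which restores the independence your Step 3 needs; and the size count must be redone as $|\mathcal S\setminus Q|>D-(n-2)C-(1-\lambda)D=knC-(n-2)C>(k-1)nC\geq k$, rather than your $|A|\geq\lambda D$, which overcounts by exactly the buffered parcels. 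With that correction your Stirling computation goes through unchanged. A further caution on your hybrid: the event \eqref{riverDM} is defined via decrypted testimony values, so a reduction to semantic security must let the distinguisher decide the event without the decryption key (or be rephrased as a statement about the adversary's view); as written this is glossed over, though admittedly the paper is no more explicit on this point.
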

\begin{proof} Fix a transmission that satisfies the hypotheses of the lemma, and let $\{\Psi_u, \thickspace \Psi_{u,v}, \thickspace \Psi_{v,u} \}_{u,v \in G}$ denote the testimonies the Sender has collected. Let $\mathcal{R}_{\cap} := \mathcal{S} \cap \mathcal{R}$, i.e$.$ $\mathcal{R}_{\cap}$ represents a {\em set} (as opposed to a {\em multi}set) consisting of the same parcels as $\mathcal{R}$ (if there was no corrupt activity, then $\mathcal{R}_{\cap} = \mathcal{R}$).
Since (considered as multisets) $\mathcal{R}_{\cap} \subseteq \mathcal{R}$, by Lemma \ref{faith} there exists $Q \subseteq \mathcal{S}$ such that $\mathcal{R}_{\cap} \subseteq \mathcal{Q}$ and:
	\begin{equation} \label{bumBum}
	\sum_{p \in \mathcal{Q}} \chi(p) \thickspace = \hspace*{-4.5pt}\sum_{u \in G \setminus \{r, s\}} \hspace*{-8pt}\Psi_u \thickspace + \sum_{p \in \mathcal{R}_{\cap}}\chi(p)
	\end{equation}
Consequently:
	\begin{alignat}{2}
	\sum_{u \in G \setminus \{r, s\}} \hspace*{-9pt}\Psi_{s,u} \hspace*{3pt} \thickspace &= \thickspace \hspace*{-5pt}\sum_{u \in G \setminus \{r, s\}} \hspace*{-6pt} \left(\Psi_u + \Psi_{u,r}\right) \quad \Leftrightarrow \notag \\
	\sum_{p \in \mathcal{S}} \chi(p) \thickspace &= \hspace*{-4.5pt}\sum_{u \in G \setminus \{r, s\}} \hspace*{-8pt}\Psi_u \thickspace + \sum_{p \in \mathcal{R}}\chi(p) \quad \Leftrightarrow \notag \\
	\sum_{p \in \mathcal{S} \setminus \mathcal{Q}} \hspace*{-4pt}\chi(p) \thickspace &= \thickspace \hspace*{-7pt}\sum_{p \in \mathcal{R} \setminus \mathcal{R}_{\cap}} \hspace*{-7pt}\chi(p),\label{isignM}
	\end{alignat}
where the first implication is \eqref{bahBah} and the second is \eqref{bumBum}. The right-hand side of \eqref{isignM} represents the sum of the characteristic vectors $\chi(p)$ for the parcels $p$ that the Adversary duplicated (e.g$.$ via packet replacement). Meanwhile, the left-hand side of \eqref{isignM} is completely random, based on the choice of $\chi:D \rightarrow \mathbb{Z}_N^K$. Notice that $\varnothing = (\mathcal{S} \setminus \mathcal{Q}) \cap (\mathcal{R} \setminus \mathcal{R}_{\cap})$ by definition of $\mathcal{R}_{\cap}$ and choice of $Q$. The fact that these sets are disjoint means that the quantities on both sides of \eqref{isignM} are {\em independent} from each other.

Therefore, the probability that the equality in \eqref{isignM} holds is equal to the probability that a random assignment $\chi(p)$ for each $p \in \mathcal{S} \hspace*{-2pt}\setminus \hspace*{-2pt}\mathcal{Q}$ produces the vector described by the right-hand side of \eqref{isignM}. Notice that if $\mathcal{R} = \mathcal{R}_{\cap}$ (i.e$.$ no malicious packet replacement was performed), then $\mathcal{Q} = \mathcal{S}$, and \eqref{isignM} is vacuously satisfied. But if the Adversary attempts to invoke a packet replacement strategy, the probability that the Adversary's strategy will maintain equality in \eqref{isignM} is equivalent to the probability that the Adversary can correctly predict the outcome distribution of an experiment in which $m$ balls are distributed into $K$ buckets, where each of the $m = |\mathcal{S} \hspace*{-2pt}\setminus \hspace*{-2pt}\mathcal{Q}|$ balls are assigned a bucket uniformly at random. It is a straightforward argument (see Fact \ref{wildWest} for details) that demonstrates this probability is bounded above by\footnote{Notice that $|\mathcal{S} \setminus \mathcal{Q}|$ does not appear anywhere in $(\sqrt{eK})\left(\sqrt{2\pi}\right)^{1-K}$.  Although the exact probability does depend on $|\mathcal{S} \setminus \mathcal{Q}|$, the bound is valid as long as $|\mathcal{S} \setminus \mathcal{Q}| > K$ (see Fact \ref{wildWest}), which will happen since $K :=k$.} $\sqrt{eK\left(2\pi\right)^{1-K}}$, where we have used that $|\mathcal{S} \hspace*{-2pt}\setminus \hspace*{-2pt}\mathcal{Q}| > (k-1)nC$:
	\begin{equation*}
	|\mathcal{S} \hspace*{-2pt}\setminus \hspace*{-2pt}\mathcal{Q}| \hspace*{4pt}=\hspace*{4pt} |S| - |Q| \hspace*{4pt}=\hspace*{4pt} D - \hspace*{-10pt}\sum_{u \in G \setminus \{r,s\}} \hspace*{-10pt}\|\Psi_u\|_{_{\scriptstyle 1}} - |R_{\cap}| \hspace*{4pt}>\hspace*{4pt} D - (n-2)C - (1 - \lambda)D \hspace*{4pt}>\hspace*{4pt} (k-1)nC,
	\end{equation*}
where the second equality is via \eqref{bumBum} and because we are in Case F3 (Sender inserted all codeword parcels), the next inequality is because the Receiver could not decode (and hence got fewer than $(1-\lambda)D$ distinct codeword parcels) and by the fact that $\|\Psi_u\|_{_{\scriptstyle 1}} \leq C$ for all nodes $u$ (as otherwise the Sender could immediately eliminate $u$ as corrupt, since $u$ is storing more codeword parcels than allowed), and the final inequality is by construction: $D:=knC/\lambda$. Note that we are using the standard $l_1$-norm $\|\cdot\|_{_{\scriptstyle 1}}$ on $\mathbb{Z}^k_N$:
	\begin{equation*}
	\|(v_1, v_2, \dots, v_k)\|_{_{\scriptstyle 1}} = |v_1| + \dots + |v_k|,
	\end{equation*}
where $v_i \in [0..N-1]$ are the canonical representatives in $\mathbb{Z}_N$.
\end{proof}
\begin{cor}\label{kissesM} Suppose a transmission fails as in Case F3, and at some later point the Sender has collected all of the testimonies from all nodes participating in that transmission. Then with overwhelming probability, there will be a node $u \in G \setminus \{r, s\}$ that fails to satisfy \eqref{item3MMM}, and thus the Sender can eliminate $u$ as corrupt.
\end{cor}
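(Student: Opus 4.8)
The plan is to prove Corollary \ref{kissesM} by contradiction, leaning on Lemma \ref{faith} and Lemma \ref{eileenM} as the two pillars and finishing with a short attribution argument. First I would have the Sender apply the reconciliation convention to the collected testimonies. If for some edge $E(u,v)$ the two reported values for $\Psi_{u,v}$ differ in Hamming distance more than one, the convention (Item 1) already lets the Sender eliminate the node that signed the more outdated value, and we are done; so I would assume from here on that every such discrepancy has Hamming distance at most one, has been resolved by the Item-2 adjustment $\Psi_w := \Psi_w + \mathbf{v}$, and that the resulting reconciled testimonies $\{\Psi_u,\Psi_{u,v},\Psi_{v,u}\}_{u,v\in G}$ agree across edges — which is exactly what makes the telescoping identity at the start of Lemma \ref{faith}'s proof legitimate.

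Next I would run the contradiction. Assume toward a contradiction that \eqref{item3MMM} holds, with respect to these reconciled testimonies, at every node $u\in G\setminus\{r,s\}$. Then Lemma \ref{faith} applies and yields \eqref{dishes}, and substituting the two identities of \eqref{bahBah} for $\sum_u\Psi_{s,u}$ and $\sum_u\Psi_{u,r}$ turns \eqref{dishes} into \eqref{riverDM} verbatim; so in this case \eqref{riverDM} holds with certainty. But Lemma \ref{eileenM}, whose hypotheses (failure as in F3, all participating testimonies in hand, and \eqref{item3MMM} at every node) are precisely what we have assumed, says \eqref{riverDM} is satisfied with probability at most $\frac{\sqrt{ek}}{(\sqrt{2\pi})^{k-1}}$. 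Since the event ``\eqref{item3MMM} holds at every node'' is contained in the event ``\eqref{riverDM} holds,'' the former has probability at most $\frac{\sqrt{ek}}{(\sqrt{2\pi})^{k-1}}$ as well; that is, except with this negligible probability there is a node $u\in G\setminus\{r,s\}$ whose reconciled testimony violates \eqref{item3MMM}.

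The remaining step I would carry out is attribution: I must argue that such a $u$ is necessarily corrupt, so that the Sender is justified in eliminating it and ending the transmission as in F4. Equation \eqref{item3MMM} is asserted for the true values of every uncorrupt node, so it is enough to check that reconciliation never turns an honest node's (honestly reported) testimony into one violating \eqref{item3MMM}. An honest node reports, for each incident edge, the status parcel with the latest round time-stamp, and every status parcel is signed by both endpoints; hence no neighbor can exhibit, for a shared edge, a status parcel with a strictly later time-stamp, so an honest node is never the ``outdated'' party whose $\Psi$-value is modified in the Hamming-distance-$1$ case. Thus an honest node's reconciled testimony equals its true values and still satisfies \eqref{item3MMM}, and the offending $u$ must be corrupt; the Sender eliminates it, which is exactly Statement (3) of Lemma \ref{threeTog}.

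I expect the main obstacle to be this attribution step rather than the contradiction itself. The quantitatively delicate content — the balls-in-bins estimate and the semantic-security reduction that makes the partition $\chi$ effectively independent of which parcels the adversary chose to replace — is entirely absorbed into Lemma \ref{eileenM}, so the Corollary is essentially a logical wrapper around Lemmas \ref{faith} and \ref{eileenM}. The one place where care is genuinely required is confirming that the reconciliation conventions, together with the requirements that all status parcels be round time-stamped and doubly signed, really do guarantee that a post-reconciliation failure of \eqref{item3MMM} can be pinned on a corrupt node (and in particular that an honest node never becomes the adjusted party). Should that argument need strengthening, the fallback is to have the Sender, whenever \eqref{item3MMM} fails at $u$, additionally demand $u$'s doubly-signed status parcels and localize the inconsistency to a single edge, reducing to the Hamming-distance case already handled by the convention.
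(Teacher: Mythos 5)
Your proposal is correct and follows essentially the same route as the paper: a proof by contradiction in which ``\eqref{item3MMM} holds at every node'' forces \eqref{riverDM} (you derive this forward via Lemma \ref{faith} and \eqref{bahBah}, where the paper runs the equivalent telescoping algebra inline in the reverse direction), which is then incompatible with the negligible bound of Lemma \ref{eileenM}. Your added attribution step (that a post-reconciliation violation of \eqref{item3MMM} can only occur at a corrupt node, since \eqref{item3MMM} is asserted for the true values of honest nodes and reconciliation never alters an honest node's reported values) is a reasonable elaboration of what the paper leaves implicit, not a different approach.
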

\begin{proof} Suppose for the sake of contradiction that all nodes satisfy \eqref{item3MMM}. Then Lemma \ref{eileenM} states that \eqref{riverDM} will {\em not} be satisfied (with overwhelming probability), and so:
	\begin{equation}\label{timeyM}
	(0, \dots, 0) \thickspace \neq \hspace*{-4pt} \sum_{u \in G \setminus \{r,s\}} \hspace*{-4pt}\Psi_u \hspace*{4pt}+ \hspace*{-6pt}\sum_{u \in G \setminus \{r,s\}} \hspace*{-6pt}\left( \Psi_{u,r} - \Psi_{s,u} \right)
	\end{equation}
Also, by symmetry we have the trivial identity:
	\begin{equation}\label{timezM}
	(0, \dots, 0) = \sum_{u \in G \setminus \{r,s\}} \hspace*{-9pt}\sum_{\hspace*{15pt}v \in G \setminus \{u,r,s\}} \hspace*{-15pt}\left( \Psi_{u,v} - \Psi_{v,u} \right)
	\end{equation}
Combining \eqref{timeyM} and \eqref{timezM} and using the fact that $\Psi_{r,u} = \mathbf{0} = \Psi_{u,s}$ for all $u$ (since the Sender never {\em receives} a codeword parcel and the Receiver never {\em sends} a codeword parcel), with overwhelming probability:
	\begin{equation}\label{timexM}
	(0, \dots, 0) \thickspace \neq \hspace*{-9pt} \sum_{u \in G \setminus \{r,s\}} \hspace*{-7pt}\Psi_u \hspace*{4pt}+ \hspace*{-10pt} \sum_{u \in G \setminus \{r,s\}}\hspace*{-4pt}\sum_{\hspace*{6pt}v \in G \setminus u} \hspace*{-8pt}\left( \Psi_{u,v} - \Psi_{v,u} \right)
	\end{equation}
Therefore, there must be at least one node $u \in G \setminus \{r,s\}$ that violates \eqref{item3MMM}, which contradicts the hypothesis.
\end{proof}
Statement (3) of Lemma \ref{threeTog} follows immediately from Corollary \ref{kissesM}. We conclude by stating (and proving) some basic results from probability that led to the bound used in Lemma \ref{eileenM}.
\begin{fact} \label{probL2} For positive integers $m,k$ and a sequence of non-negative integers $\{n_i\}_{i=1}^k$ satisfying $m = \sum_{i=1}^k n_i$, the quantity $(n_1! n_2! \dots n_k!)$ is \textbf{minimized} if $|n_i - n_j| \leq 1$ for all pairs $(i, j)$.
\end{fact}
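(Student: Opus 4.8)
The plan is to prove Fact~\ref{probL2} by a standard \emph{smoothing} (exchange) argument: I would show that any tuple that is not balanced admits a small local perturbation which strictly decreases the product $n_1!\,n_2!\cdots n_k!$, conclude that every minimizer must be balanced, and then observe that all balanced tuples with the prescribed sum are permutations of each other and hence attain the same (therefore minimal) value of the product.

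First I would set up the exchange step. Suppose $(n_1,\dots,n_k)$ is not balanced, so there are indices $i,j$ with $n_i \ge n_j + 2$. Replace $n_i$ by $n_i - 1$ and $n_j$ by $n_j + 1$, leaving the other coordinates fixed; the result is still a $k$-tuple of non-negative integers summing to $m$ (note $n_i \ge 2$, so $n_i - 1 \ge 0$). Only the two affected factorials change, and the ratio of the new product to the old is
\[
\frac{(n_i-1)!\,(n_j+1)!}{n_i!\,n_j!} \;=\; \frac{n_j+1}{n_i}\,.
\]
Since $n_i \ge n_j + 2$ we have $n_j + 1 \le n_i - 1 < n_i$, so this ratio is strictly less than $1$; hence the move strictly decreases the product.

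Then I would finish as follows. There are only finitely many $k$-tuples of non-negative integers summing to $m$, so a minimizer of the product exists; by the previous step, no minimizer can contain a pair $(i,j)$ with $n_i \ge n_j + 2$, i.e., every minimizer satisfies $|n_i - n_j| \le 1$ for all pairs. Conversely, writing $m = qk + s$ with $0 \le s < k$, any balanced tuple has exactly $s$ coordinates equal to $q+1$ and $k-s$ equal to $q$, so all balanced tuples are permutations of one another and yield the same value of $n_1!\cdots n_k!$; this common value is therefore the minimum, which is exactly the assertion. I do not anticipate any real obstacle here — the argument is elementary — the only points needing care are checking that the local move stays within the feasible set and isolating precisely the two factorials that change in the ratio computation; everything else is bookkeeping. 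As a bonus, the strictness of the inequality above shows the balanced tuple is in fact the unique minimizer up to permutation, though the stated Fact only requires the weaker ``is minimized if'' form.
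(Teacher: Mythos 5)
Your proof is correct and follows essentially the same exchange argument as the paper: assume a non-balanced minimizer, transfer one unit from a larger coordinate to a smaller one, and observe the product strictly decreases. You spell out the ratio $\frac{(n_i-1)!\,(n_j+1)!}{n_i!\,n_j!} = \frac{n_j+1}{n_i} < 1$ explicitly and add the observation that all balanced tuples are permutations of one another, which makes the argument slightly more complete than the paper's, but the underlying idea is the same.
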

\begin{proof} (Proof by contradiction.)  Suppose that there is a sequence $\{n_1, \dots, n_k\}$ that achieves the minimal value for $(n_1! n_2! \dots n_k!)$ and yet does not satisfy the hypothesis that $|n_i - n_j|  \leq 1$ for all pairs $(i, j)$.  In particular, let $i$ and $j$ be such that $n_i - n_j \geq 2$, and let $M = n_1! n_2! \dots n_k!$ denote the minimal value obtained by this sequence $\{n_1, \dots, n_k\}$.  But this leads to a contradiction, since a lower minimum can be obtained from the sequence $\{n'_1, n'_2, \dots, n'_k\}$ satisfying: $n'_i = n_i -1$, $n'_j = n_j + 1$, and $n'_l = n_l$ for all other values of $l \in [1..k]$.
\end{proof}
\begin{fact} \label{probL1} Let $m, k, N$ be positive integers with $N > m \geq k$, and let $\mathsf{X}$ be a random variable defined by:
	\begin{equation}
	\mathsf{X} := \sum_{i=1}^m \mathbf{e}_{\mathsf{Y}_i},
	\end{equation}
where $\mathbf{e}_j \in \mathbb{Z}^k_N$ is the characteristic vector with a `1' in the $j^{th}$ position, and $\{\mathsf{Y}_i\}$ are independent random variables satisfying $Pr[Y_i = j] = 1/k$ (for any $j \in [1..k]$).  For any random vector $\mathbf{v} = (n_1, n_2, \dots, n_k)$ with $\sum_i n_i = m$, the probability $\mathsf{X} = \mathbf{v}$ obeys:
	\begin{alignat}{2}
	Pr[\mathsf{X} = \mathbf{v}] \thickspace &= \thickspace \frac{\left( \begin{array}{c} m \\ n_1  \end{array}\right)\left( \begin{array}{c} m-n_1 \\ n_2  \end{array}\right) \dots \left( \begin{array}{c} m - n_1 - n_2 \dots - n_{k-1} \\ n_k  \end{array}\right)}{k^m} \label{funProb} \\ 
	&= \thickspace \left(\frac{1}{k^m}\right)\frac{m!}{n_1! n_2! \dots n_k!} \notag
	\end{alignat}
\end{fact}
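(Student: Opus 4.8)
The plan is to recognize $\mathsf{X}$ as a multinomially distributed count vector (lifted to $\mathbb{Z}_N^k$) and simply read off its probability mass function. First I would dispose of the modular arithmetic: each coordinate of $\mathsf{X}$ is a sum of $m$ contributions, each $0$ or $1$, hence an integer in $[0,m]$; since $N > m$ by hypothesis, no coordinate ever reduces modulo $N$, so for a target vector $\mathbf{v} = (n_1, \dots, n_k)$ the event $\{\mathsf{X} = \mathbf{v}\}$ in $\mathbb{Z}_N^k$ coincides with the event that, among the independent draws $\mathsf{Y}_1, \dots, \mathsf{Y}_m$, the value $j$ is taken exactly $n_j$ times for every $j \in [1..k]$. (If some $n_j < 0$ this event is empty and both sides of \eqref{funProb} vanish, so we may assume each $n_j \geq 0$; together with $\sum_i n_i = m$ this holds automatically for any $\mathbf{v}$ the event can realize.)

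Next I would count. The outcome space of $(\mathsf{Y}_1, \dots, \mathsf{Y}_m)$ is $[1..k]^m$, and by independence and the uniform law $Pr[\mathsf{Y}_i = j] = 1/k$, each of the $k^m$ sequences has probability exactly $k^{-m}$. The number of sequences realizing the prescribed counts is obtained by choosing which $n_1$ of the $m$ positions receive value $1$, then which $n_2$ of the remaining $m - n_1$ receive value $2$, and so on down to value $k$; this gives the product $\binom{m}{n_1}\binom{m - n_1}{n_2}\cdots\binom{m - n_1 - \cdots - n_{k-1}}{n_k}$, which telescopes (each numerator cancelling the trailing denominator of the next factor) to $\frac{m!}{n_1! n_2! \cdots n_k!}$.

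Finally I would multiply the count of favorable sequences by the per-sequence probability $k^{-m}$, yielding both displayed forms in \eqref{funProb} simultaneously. I do not anticipate any genuine obstacle here: this is the textbook derivation of the multinomial distribution, and the only subtlety worth flagging is the one handled up front, namely that the hypothesis $N > m$ is exactly what forces equality in $\mathbb{Z}_N^k$ to agree with equality of the integer count vectors being counted. (The remaining hypothesis $m \geq k$ plays no role in this Fact and is carried along only because the estimates built on it later, in Lemma \ref{eileenM}, require it.)
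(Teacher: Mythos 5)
Your proposal is correct and follows essentially the same route as the paper: the paper also counts assignments of the $m$ draws to the $k$ values (phrased as balls into labelled buckets), obtains the product of binomial coefficients over $k^m$, and telescopes it to the multinomial coefficient. Your explicit observation that $N > m$ prevents any wrap-around in $\mathbb{Z}_N^k$ is a slight tightening of the paper's argument, which leaves that point implicit, but it does not change the substance.
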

\begin{proof} Consider the following scenario: $m$ (unlabelled) balls are to be partitioned into $k$ labelled buckets, where each ball is assigned a bucket in a uniformly random manner.  At the end of this experiment, we may express the distribution of balls in buckets as the $k$-tuple: $(n_1, n_2, \dots, n_k) \in \mathbb{Z}_N^k$, where $n_i$ denotes the number of balls that ended up in bucket $i$.  It is clear that such an experiment describes the random variable $\mathsf{X}$, that is, for any $\mathbf{v} \in \mathbb{Z}_N^k$ with $m = \sum_{i=1}^k n_i$, we have that the above experiment results in the distribution $\mathbf{v}$ with the same probability that $\mathsf{X} = \mathbf{v}$.  Therefore, we prove the bound in \eqref{funProb} in terms of the scenario of partitioning $m$ (unlabelled) balls into $k$ (labelled) buckets.

Let $\mathbf{v} = (n_1, n_2, \dots, n_k)$ be fixed, and we determine the probability that the experiment will result in the distribution of balls in buckets as described by $\mathbf{v}$.  The fact that this probability is as described in \eqref{funProb} comes from a counting argument: the first term in the numerator is the number of ways $n_1$ balls can be chosen from $m$ balls to be assigned to the first bucket; the second term counts the number of ways $n_2$ balls can be chosen from the remaining $m-n_1$ balls to be assigned to the second bucket; and so on.  The denominator of the right-hand side of \eqref{funProb} counts the total number of ways $m$ balls can be distributed among $k$ buckets.  The second equality is a straightforward computation.
\end{proof}
The following fact formalizes the probability of correctly predicting the outcome of an experiment in which $m$ balls are distributed in $k$ buckets, where the choice of bucket for each ball is uniformly random. In particular, it states that the distribution with the highest probability is the one in which the $m$ balls are evenly distributed among the $k$ buckets (i.e$.$ each bucket has $m/k$ balls), and that the probability of this distribution is bounded above by $\frac{\sqrt{ek}}{\left(\sqrt{2\pi}\right)^{k-1}}$.
\begin{fact} \label{wildWest} Let $m, k, N$ be positive integers with $N > m \geq k$.  Let $W \subset \mathbb{Z}^k_N$ be a subset of vectors $\mathbf{v} \in \mathbb{Z}_N^k$ such that $\|\mathbf{v}\|_{_{\scriptstyle 1}} = m$.  Let $\mathsf{X}$ be a random variable as described in the hypothesis of Lemma \ref{probL1}.  For any fixed element $\mathbf{v} \in W$, the probability that $\mathsf{X} = \mathbf{v}$ is maximal if $\mathbf{v} = (m/k, m/k, \dots, m/k)$,\footnote{The hypotheses of the lemma do not assume $k | m$; when this is not the case, interpret $\mathbf{v} = (m/k, m/k, \dots, m/k)$ to mean $\mathbf{v} = (\lfloor m/k \rfloor, \dots, \lfloor m/k \rfloor, \lceil m/k \rceil, \dots, \lceil m/k \rceil)$, where the number of terms equal to $\lceil m/k \rceil$ is the remainder of $m$ divided by $k$.} and for this $\mathbf{v}$, the probability that $\mathsf{X} = \mathbf{v}$ is bounded from above by:
	\begin{equation}
	\frac{\sqrt{ek}}{\left(\sqrt{2\pi}\right)^{k-1}}
	\end{equation}
\end{fact}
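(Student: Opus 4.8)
The plan is to chain together the exact formula of Fact~\ref{probL1}, the minimization principle of Fact~\ref{probL2}, and Stirling's estimates. First I would apply Fact~\ref{probL1}: for any $\mathbf{v} = (n_1,\dots,n_k) \in W$ (so $\sum_i n_i = m$ with each $n_i \ge 0$),
\begin{equation*}
\Pr[\mathsf{X} = \mathbf{v}] \;=\; \frac{1}{k^m}\cdot\frac{m!}{n_1!\,n_2!\cdots n_k!}.
\end{equation*}
With $m$ and $k$ held fixed, the right-hand side is largest exactly when the product $n_1!\cdots n_k!$ is smallest subject to $\sum_i n_i = m$; by Fact~\ref{probL2} this happens precisely when $|n_i - n_j| \le 1$ for all $i,j$, i.e.\ for the balanced vector $\mathbf{v}=(m/k,\dots,m/k)$ (read with floors and ceilings as in the footnote when $k \nmid m$). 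This settles the ``maximal'' assertion and reduces the task to bounding $\Pr[\mathsf{X}=\mathbf{v}]$ for this single balanced vector.

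To estimate that value I would use the two-sided Stirling bound $\sqrt{2\pi n}\,(n/e)^n \le n! \le e\sqrt{n}\,(n/e)^n$, valid for every integer $n\ge 1$. It is cleanest to treat the divisible case $k \mid m$ first: writing $\ell := m/k \ge 1$,
\begin{equation*}
\Pr[\mathsf{X}=\mathbf{v}] \;=\; \frac{m!}{k^m\,(\ell!)^k} \;\le\; \frac{e\sqrt{m}\,(m/e)^m}{k^m\,(2\pi\ell)^{k/2}(\ell/e)^{\ell k}}.
\end{equation*}
The crucial cancellation is $k^m\,(\ell/e)^{\ell k} = (k\ell/e)^m = (m/e)^m$, which eliminates the exponential terms and leaves
\begin{equation*}
\Pr[\mathsf{X}=\mathbf{v}] \;\le\; \frac{e\sqrt{m}}{(2\pi\ell)^{k/2}} \;=\; \frac{e\,k^{k/2}}{(2\pi)^{k/2}\,m^{(k-1)/2}} \;\le\; \frac{e\,k^{k/2}}{(2\pi)^{k/2}\,k^{(k-1)/2}} \;=\; \frac{e\sqrt{k}}{(\sqrt{2\pi})^{k}},
\end{equation*}
where the last inequality uses the hypothesis $m \ge k$. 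Since $e \le 2\pi$, this is at most $\sqrt{ek}/(\sqrt{2\pi})^{k-1}$, the claimed bound; in fact I would simply record the slightly stronger estimate $e\sqrt{k}/(\sqrt{2\pi})^{k}$.

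For the general case, write $m = qk + r$ with $q = \lfloor m/k\rfloor \ge 1$ and $0 \le r < k$, so the balanced vector has $r$ coordinates equal to $q+1$ and $k-r$ equal to $q$. Applying the Stirling lower bound to each factorial in $(q!)^{k-r}\big((q+1)!\big)^r$ and the upper bound to $m!$, the same manipulation goes through once one observes, via convexity of $x\mapsto x\ln x$ (Jensen with weights $(k-r)/k$ and $r/k$), that $q^{q(k-r)}(q+1)^{(q+1)r} \ge (m/k)^m$, so that $k^m$ times this product dominates $m^m$ and the exponential factors again cancel; then bounding $(2\pi q)^{(k-r)/2}(2\pi(q+1))^{r/2} \ge (2\pi q)^{k/2}$ and using $m < 2qk$ (which holds since $q\ge 1$) to control $\sqrt{m}/\sqrt{q}$ yields the same conclusion, now with the constant needing only $e \le \pi$. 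I expect this non-divisible bookkeeping --- keeping the exponents aligned so that the $k^m$ in the denominator exactly absorbs the powers of $e$ --- to be the only real obstacle; everything else is a direct invocation of Facts~\ref{probL1} and~\ref{probL2} together with Stirling.
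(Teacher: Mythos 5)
Your proposal is correct, and its skeleton is the same as the paper's: invoke Fact \ref{probL1} for the exact multinomial probability, Fact \ref{probL2} to reduce to the balanced vector, and then Stirling. The difference is in how the final estimate is closed. The paper bounds the denominator by $\bigl(\lfloor m/k\rfloor!\bigr)^k$, applies Stirling on both sides, and then simply asserts that the resulting expression is monotone decreasing in $m$ (for fixed $k$), so that the worst case is $m=k$, which evaluates to $\sqrt{ek}/(\sqrt{2\pi})^{k-1}$. You instead carry out the cancellation $k^m(\ell/e)^{\ell k}=(m/e)^m$ explicitly in the divisible case and then use $m\ge k$ algebraically, and in the non-divisible case you keep the true denominator $(q!)^{k-r}\bigl((q+1)!\bigr)^r$ and use convexity of $x\mapsto x\ln x$ (Jensen) to get $q^{q(k-r)}(q+1)^{(q+1)r}\ge (m/k)^m$, together with $m<2qk$; I checked the arithmetic and both branches close correctly, needing only $e\le 2\pi$ (resp.\ $e\le\pi$) at the end. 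What your route buys is that it avoids the paper's unproved monotonicity-in-$m$ claim and treats the $k\nmid m$ case honestly (the paper's substitution of $\lfloor m/k\rfloor$ everywhere makes the exponential cancellation imperfect and hides exactly the bookkeeping you do via Jensen), and it even yields the marginally stronger constant $e\sqrt{k}/(\sqrt{2\pi})^{k}$; the price is somewhat more case analysis than the paper's one-line evaluation at $m=k$.
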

\begin{proof} For any random vector $\mathbf{v} = (n_1, n_2, \dots, n_k)$ with $\sum_i n_i = m$, Fact \ref{probL1} demonstrates that the probability $\mathsf{X} = \mathbf{v}$ obeys:
	\begin{equation} \label{sists}
	Pr[\mathsf{X} = \mathbf{v}] \thickspace = \thickspace \left(\frac{1}{k^m}\right)\frac{m!}{n_1! n_2! \dots n_k!}
	\end{equation}
By Fact \ref{probL2}, \eqref{sists} is maximized if $|n_i - n_j| \leq 1$ for all pairs $(i, j)$, in which case $n_i \geq \lfloor m/k \rfloor$ for all $i$, and then plugging this into \eqref{sists}:
	\begin{alignat}{2}
	Pr[\mathsf{X} = \mathbf{v}] \thickspace &= \thickspace \left(\frac{1}{k^m}\right)\frac{m!}{n_1! n_2! \dots n_k!} \notag \\
	&\leq \thickspace \left(\frac{1}{k^m}\right)\frac{m!}{\left(\lfloor \frac{m}{k} \rfloor !\right)^k} \notag \\
	&\leq \thickspace \left(\frac{1}{k^m}\right)\frac{\sqrt{2e \pi m} \left( \frac{m}{e} \right)^m}{\left(\sqrt{2 \pi \lfloor \frac{m}{k} \rfloor} \left(\lfloor \frac{m}{k} \rfloor/e \right)^{\lfloor \frac{m}{k} \rfloor}\right)^k}, \label{toBecome}
	\end{alignat}
where the final inequality is via Stirling's Formula.  Since the quantity on the right side of \eqref{toBecome} is monotone decreasing as $m$ increases (for fixed $k \geq 1$), the probability is maximal for $m=k$ (given the constraint $m \geq k$), and then \eqref{toBecome} becomes:
	\begin{equation}
	Pr[\mathsf{X} = \mathbf{v}] \thickspace \leq \thickspace \frac{\sqrt{ek}}{\left(\sqrt{2\pi}\right)^{k-1}},
	\end{equation}
as claimed.
\end{proof}
The equalities of \eqref{bahBah} and \eqref{item3MMM} and the proofs of Lemmas \ref{faith}, \ref{eileenM}, and Corollary \ref{kissesM} relied on the fact that the coordinates of $\Psi_{u,v}$ never ``wrapped-around,'' i.e$.$ no coordinate ever became as large as $N$; in particular, we assumed addition on $\mathbb{Z}^k$ rather than $\mathbb{Z}^k_N$. The following Lemma assures us that if the Sender ever needs to use the $\{\Psi_{u,v}\}$ values to eliminate a corrupt node, then the values of the $\Psi_{u,v}$ are the same as if they were in $\mathbb{Z}^k$, i.e$.$ no coordinate ever reset as a result of getting as large as $N$.
\begin{lmma}\label{hurtsI} For any honest node $u \in G$, if at any time there exists any node $v \in G$ such that a coordinate of $\Psi_{u,v}$ (or $\Psi_{u,v}$) is $N-1$, then necessarily a corrupt node can be identified.
\end{lmma}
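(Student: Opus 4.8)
The plan is to control, from above, how many codeword parcels an honest node can possibly transfer along a single adjacent edge during one transmission, to show this bound stays below $N-1$ whenever the group $\mathcal{G}$ is taken as large as the model requires, and to conclude that the hypothesis can be met only as a by-product of corrupt behavior that is already detectable. The starting point is a bookkeeping observation: since each $\mathcal{E}(\chi(p))$ encrypts a \emph{unit} vector in $\mathbb{Z}_N^k$, the decrypted $\Psi_{u,v}$ is a sum of unit vectors, one per codeword parcel sent from $u$ to $v$; hence, so long as no coordinate has yet reached $N$ (no wrap-around), $\|\Psi_{u,v}\|_1$ equals the number of codeword parcels transferred from $u$ to $v$ so far this transmission, and $\|\Psi_{v,u}\|_1$ counts those received by $u$ from $v$. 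Consequently the hypothesis forces $u$'s own honestly computed tally of parcels sent to (or received from) $v$ to be at least $N-1$. The ``two senses'' ambiguity of $\Psi_{u,v}$ flagged in the text is handled by the verification step of Figure \ref{briefDisc} together with the Hamming-distance-$\leq 1$ reconciliation conventions: honest $u$ recomputes its copies of $\Psi_{u,v}, \Psi_{v,u}$ directly from the $\mathcal{E}(\chi(p))$ it has itself transferred, never adopts a neighbor's status parcel whose $\Psi$-entries differ from its own by more than the single possibly-in-transit parcel, and the Sender applies the same test to the two testimony copies; so if some copy of $\Psi_{u,v}$ or $\Psi_{v,u}$ is pushed to $N-1$ while $u$'s honest tally is far smaller, the Hamming distance exceeds one and the node holding the stale/invalid value is immediately identified as corrupt.

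Next I would bound that honest tally. By the rule in the Routing Rules footnote, an honest node stops transferring codeword parcels altogether once $\Phi_u > kCD$. Each finalized codeword-parcel transfer along $E(u,v)$, in either direction, adds the corresponding potential difference $\phi_{u,v}$ to $\Phi_{u,v}$ or $\Phi_{v,u}$, hence to $\Phi_u$; and the send/receive thresholds of Figure \ref{briefDisc} force $\phi_{u,v} > C/2n - 2n = \Theta(n)$ whenever an honest endpoint sends or receives (using $C \geq 24n^2$; this lower bound applies equally to transfers finalized during Step~1 of an edge activation, since the relevant threshold is expressed in terms of the same stale heights $H_{old}, H_v$ that define $\phi_{u,v}$ in that case, and it holds even when a corrupt neighbor lies about its height, because it is the transfer \emph{condition} itself, not a reported value, that guarantees the gap). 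Thus $\Phi_u \geq \Phi_{u,v} + \Phi_{v,u} > (C/2n - 2n)\, m_{u,v}$, where $m_{u,v}$ is the number of codeword parcels $u$ has transferred with $v$; since $u$ halts as soon as $\Phi_u$ passes $kCD$ and a single transfer raises $\Phi_u$ by at most $C+2n$, an honest node satisfies $\Phi_u \leq kCD + C + 2n$, so $m_{u,v} < (kCD + C + 2n)/(C/2n - 2n) = O(k^2 n^4/\lambda)$. By the group-size requirement of Section \ref{Model} ($|\mathcal{G}| = N$ is to exceed the number of codeword-parcel transfers between two nodes one of which is honest, met by the stated $\Omega(kn^4)$ once $k$ and $\lambda$ are treated as constants), we obtain $N - 1 \geq m_{u,v}$ for every honest $u$.

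Combining, $\|\Psi_{u,v}\|_1, \|\Psi_{v,u}\|_1 \leq m_{u,v} \leq N-1$, with a coordinate actually equal to $N-1$ possible only in the extremal case $m_{u,v} = N-1$ in which all of $u$'s transfers with $v$ fell in one $\chi$-set and $\Phi_u$ has reached its halting threshold; but then $\Phi_u > (C/2n - 2n)(N-1) \geq kCD$ (by the choice of $N$), which I claim is itself a certificate of corruption: such a value cannot arise in a corruption-free transmission, and the excess of $\Phi_u$ — equivalently of $\sum_w \Phi_w$ — over $kCD$ is exactly the signal behind \eqref{potEqn} and Statement~(2) of Lemma \ref{threeTog}, so once the Sender has the relevant testimonies (which it will, by Statement~(1) of Lemma \ref{threeTog}) it can eliminate a corrupt node. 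In every non-extremal case the hypothesis is simply unsatisfiable for honest $u$. Either way the lemma holds, and — this being its purpose — every $\Psi$-quantity appearing in Lemma \ref{faith}, Lemma \ref{eileenM}, and Corollary \ref{kissesM} provably has all coordinates strictly below $N$, so the arithmetic performed there in $\mathbb{Z}^k$ rather than $\mathbb{Z}^k_N$ is legitimate.

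I expect the main obstacle to be the uniform $\Theta(n)$ lower bound on the per-transfer potential increase across all of the cases of Figure \ref{briefDisc} (Sender insert, Receiver receive, internal-node send, internal-node receive, and Step-1 versus Step-2 finalization), and, entangled with it, cleanly ruling out that a corrupt neighbor or a corrupt node's testimony ever drives an honestly-held $\Psi$-value past the true transfer count by more than the unavoidable $\pm 1$ slack — this is where the verification step of Figure \ref{briefDisc} and the reconciliation conventions do the real work. A minor loose end is confirming that the group size mandated by the model indeed dominates the $O(k^2 n^4/\lambda)$ transfer bound derived here.
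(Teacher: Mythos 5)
Your proposal is correct and follows essentially the same route as the paper's own proof: a coordinate of $\Psi_{u,v}$ reaching $N-1$ forces at least $N-1$ codeword-parcel transfers across $E(u,v)$, each such transfer contributes at least $C/2n - 2n$ to $\Phi_u$, and since $N$ is chosen large relative to $nD$ the resulting aggregate potential drop crosses the threshold that lets a corrupt node be eliminated exactly as in Case F2. The extra machinery you add (the $\Phi_u > kCD$ halting rule and the Hamming-distance reconciliation) is consistent with the paper but not needed for its one-paragraph argument, and the constant-factor loose end you flag about $N$ versus the transfer bound is present in the paper's own treatment as well.
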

\begin{proof} We mentioned in Section \ref{Model} that $N := |\mathcal{G}| \in \Omega(kn^4)$, and more precisely, we demand $N > 3nD$. In this case, if any honest node has exchanged $N$ codeword parcels with a neighbor, then these $N$ transfers (each of which causes a potential drop of at least $|H'-H| \geq C/2n-2n$) in aggregate will correspond to a potential drop of at least $CD$, since $N > 3nD$ implies $N(C/2n - 2n) \geq CD$ (this uses the fact that $C \geq 12n^2$). Therefore, if any honest node has exchanged $N$ codeword parcels with a neighbor, then a corrupt node can be identified as is done for Case F2 of transmission failure.
\end{proof}

\end{document}